\def\eqref#1{equation~\ref{#1}}
\def\1{\bm{1}}
\DeclareMathAlphabet{\mathsfit}{\encodingdefault}{\sfdefault}{m}{sl}
\SetMathAlphabet{\mathsfit}{bold}{\encodingdefault}{\sfdefault}{bx}{n}
\newcommand{\E}{\mathbb{E}}
\newcommand{\R}{\mathbb{R}}
\DeclareMathOperator*{\argmax}{arg\,max}
\newcommand{\emdash}{\,---\,}
\newcommand{\hatd}{\tilde{\mathbf{d}}}
\newcommand{\hatp}{\hat{\mathbf{p}}}
\newcommand{\inst}{\mathcal{I}}
\newcommand{\FV}{\mathrm{FV}}
\DeclareMathOperator{\dev}{dev}
\title{Near-Optimal Dropout-Robust Sortition}
\author{
~~Maya Pal Gambhir$^{1}$,~~Bailey Flanigan$^{2}$,~~Aaron Roth$^1$\\ 
$^1$University of Pennsylvania, $^2$MIT\\ 
}
\begin{document}
\iclrfinalcopy
\maketitle

\begin{abstract}
Citizens' assemblies\,--\,small panels of citizens that convene to deliberate on policy issues\,--\,often face the issue of panelists dropping out at the last-minute. Without intervention, these dropouts compromise the size and representativeness of the panel, prompting the question: Without seeing the dropouts ahead of time, can we choose panelists such that \textit{after} dropouts, the panel will be representative and appropriately-sized? We model this problem as a minimax game: the minimizer aims to choose a panel that minimizes the \textit{loss}, i.e., the deviation of the ultimate panel from predefined representation targets. Then, an adversary defines a distribution over dropouts from which the realized dropouts are drawn. Our main contribution is an efficient loss-minimizing algorithm, which remains optimal as we vary the maximizer's power from worst case to average case. Our algorithm\,--\,which iteratively plays a projected gradient descent subroutine against an efficient algorithm for computing the best-response dropout distribution\,--\,also addresses a key open question in the area: how to manage dropouts while ensuring that each potential panelist is chosen with relatively \textit{equal} probabilities. Using real-world datasets, we compare our algorithms to existing benchmarks, and we offer the first characterizations of tradeoffs between robustness, loss, and equality in this problem.

\end{abstract}

\section{Introduction} \label{sec:intro}

\textit{Citizens assemblies} are a method for democratic decision-making where a representative group of citizens is randomly selected to convene, deliberate, and produce collective policy proposals. Citizens' assemblies are on the rise globally: in just the past few years, assemblies have been run by, e.g., the European Commission on mobility and food waste \citep{ec_learning_mobility_2024,ec_food_waste_panel_2023}; the Netherlands on climate policy \citep{nl_burgerberaad_klimaat_2025}; and Ireland on drug use \citep{ireland_cadu_report_2024}.

We study the task of \textit{sortition}, i.e., the random selection of the \textbf{\textit{panel}} of assembly participants. In sortition, one must randomly choose a panel from a given \textbf{\textit{pool}} of volunteers. This pool is usually highly self-selected on important demographic and ideological dimensions. Motivated by the demands of sortition in practice and decades of political theory (e.g., see \cite{carson1999random}), we adopt the same three algorithmic ideals for the sortition process as considered in past work (see \Cref{sec:relatedwork}):

\textsc{I)  Panel size.} The panel should be of some fixed, practitioner-chosen size, due to logistical constraints and attributed costs per participant. Let this desired panel size be $k \in \mathbb{N}$.

\textsc{II)  Representation.} The panel must be ``representative'' of the population. Although the precise definition of representation is debated, the ability for practitioners to actively \textit{control} representation is essential, to ensure they do not replicate the skew of the pool. In practice, this control is usually achieved via hard \textit{quotas}: practitioner-defined upper and lower bounds on the number of panelists from predefined groups (defined by, e.g., age, education, political leaning). 

\textsc{III)  Equality.} Finally, the sortition process should select all pool members with relatively equal \textit{selection probabilities}, i.e., chances of being chosen for the panel. While the need to reverse the skew of the pool makes it mathematically impossible to give pool members \textit{exactly equal} selection probabilities, making these selection probabilities \textit{as equal as possible} is a core normative principle of sortition \citep{carson1999random}. Further, past work formally shows that very \textit{high} or very \textit{low} selection probabilities concretely compromise fairness and create manipulation incentives \citep{baharav2024fair}. We therefore adopt the perspective that a lottery is more \textit{equal} if its maximum selection probability is lower and its minimum is higher; formally, for $\alpha,\beta \in [0,1]$ with $\alpha \leq \beta$, we say that a sortition procedure is $(\alpha,\beta)$-equal if the selection probabilities it realizes are bounded in $[\alpha,\beta]$. 

Given ideals \textsc{I-III}, the \textbf{standard sortition task} is then: given a pool, sample a panel from that pool that satisfies desiderata \textsc{I} and \textsc{II}, while achieving desideratum \textsc{III} to the highest degree possible. 

\textbf{The dropouts problem.} After much research on the standard sortition task, a recent paper by \citet{assos2025alternates} pointed out a threat to \textsc{I. panel size} and \textsc{II. representation} that was not encompassed by this task's standard formulation: \textbf{\textit{dropouts}}. In practice, some panelists often drop out on or around the first day of the assembly, compromising \textsc{representation} and/or leaving the \textsc{panel size} too small. Assos et al. identified various measures that practitioners take in anticipation of dropouts: choosing \textit{alternates} (people to have on hand to replace dropouts as needed), or choosing a larger panel initially to make it \textit{robust} to dropouts. With either approach, the key challenge is that dropouts are last-minute, so the alternates/panel must be selected \textit{before} seeing who drops out.

\cite{assos2025alternates} modeled each panelist's dropout event as an independent Bernoulli, and they proposed an empirical risk minimization (ERM)-based algorithm (henceforth called \textit{ERM-Benchmark}) that does the following: given the probabilities that each pool member will drop out if chosen for the panel, find the optimal set of alternates / robust panel. Here, ``optimal'' means the set that minimizes the \textit{Loss}, i.e., the deviation of the final panel from the quotas, in expectation over the randomly-drawn dropouts. 

\textbf{Our goal.} Our high-level goal is to design a sortition procedure that directly produces a dropout-robust panel (rather than selecting alternates).\footnote{Selecting a dropout-robust panel directly (rather than first selecting a panel, \textit{then} choosing extras or alternates) is logistically simpler, and allows us to more directly examine fundamental trade-offs without the additional constraint that a panel has been pre-chosen by an external algorithm.} We want to satisfy ideals \textsc{I - III} \textit{accounting for dropouts:} This means that the panel \textit{after dropouts and corrective measures} would ideally be of size $k$ (\textsc{I}) and satisfy the quotas (\textsc{II}). For ideal \textsc{III}, we want our procedure to ensure that no pool member has more than $\beta$ or less than $\alpha$ chance of reaching the panel.

Our proposed sortition algorithm aims to close two remaining gaps in how the \textit{ERM-Benchmark} addresses these ideals, outlined below.

\textbf{Gap 1.} \textit{Robustness of \textsc{I. Panel Size}, \textsc{II. Representation} to correlated dropout events with unknown probabilities.} The \textit{ERM-Benchmark} takes the dropout probabilities as inputs, but in practice they are actually unknown. While Assos et al. points out that they can be estimated from past assembly data, such data is limited by difficulties combining data containing heterogeneous features, and dropout probabilities can be unpredictably affected by assembly-specific shocks (e.g., the policy topic, communication failures, weather). Put more formally, we contend that if $\dvec \in [0,1]^n$ is the \textit{true} vector of marginal probabilities with which each pool member in $1 \dots n$ will drop out, our estimated version of this vector $\hatd \in [0,1]^n$ could be quite different. \citet{assos2025alternates} do tightly characterize their algorithms' loss under prediction errors satisfying $\|\mathbf{d} - \hatd\|_\infty \leq \gamma$ (where $\gamma$ is generic in $[0,1]$), and while this loss bounded, it is potentially quite large.\footnote{Their bound is $\gamma|FV|$ (Thm.~4.2, 4.3, \citep{assos2025alternates}). In practice, $|FV|$ is roughly 20-30, so if any $\tilde{d}_i$ is off by $10\%$, a quota can be violated by a factor of 2.} This large loss is unsurprising because their algorithm is not \textit{designed} to be robust to such errors --- it simply accepts $\hatd$ as the truth. 

This brings us to our first algorithmic goal: given $\hatd$, we want to compute a panel that does as well as possible on \textsc{I. Panel Size} and \textsc{II. Representation} after dropouts are drawn from the worst-case distribution with marginal dropout probabilities $\dvec : \|\dvec - \hatd\|_\infty \leq \gamma$. Here, $\gamma$ can then be interpreted as the \textit{robustness level} of such an algorithm, and can be chosen by the user. Importantly, while we constrain errors in \textit{marginal} dropout probabilities the same way as Assos et al., we aim to be robust against an even strong adversary: while Assos et al.~assume dropouts occur \textit{independently} according to $\hatd$, we allow \textit{arbitrary correlations} between panelists' dropout events.  

\textbf{Achieving \textsc{III. $(\alpha,\beta)$-Equality}.} 
\textit{ERM-Benchmark} does not consider the ideal of \textsc{Equality} at all. In fact, it is \textit{deterministic}: if there is a uniquely optimal set of panelists / alternates, their algorithm deterministically chooses it. As we confirm empirically in real data (\Cref{section:experiments}), their algorithm often gives pool members selection probability 1. In addition to robustness $\gamma$, we therefore want to allow the user to control the $(\alpha,\beta)$\textsc{-Equality} of the selection process. This means they can provide $\alpha,\beta \in [0,1]$, and we want to guarantee that all selection probabilities induced by our sortition algorithm are in these bounds.

\subsection{Approach and Contributions}
More formally, our problem is as follows: We are given inputs including the pool $N$ of size $n$, the desired panel size $k$, and the quotas; the estimated marginal dropout probabilities $\hatd$; the desired robustness level $\gamma \in [0,1]$; and equality requirements $\alpha, \beta \in [0,1]$. We want to output a panel $\tilde{K} \subseteq N$ such that after some set of agents $D$ drops out, the post-dropout panel $\tilde{K}\setminus D$ has minimal \textit{Loss}, i.e., only minimally violates the quotas and desired panel size. For the algorithm to be \textit{optimally} $\gamma$-robust, it must minimize the expected loss of $\tilde{K} \setminus D$ assuming that $D$ is sampled according to an adversarially chosen dropout distribution $\delta$. The only constraint on $\delta$ is that the marginal dropout probabilities $\mathbf{d}$ it implies are constrained so that $\dvec : \|\dvec - \hatd\|_\infty \leq \gamma$ --- i.e. its marginal dropout probabilities are in the $\gamma$-radius $L_\infty$ ball around $\hatd$. We let $\Delta(\hatd,\gamma)$ be the set of all joint distributions over dropout sets $D$ consistent with marginal probabilities $\dvec$ lying in the $\gamma$-ball around $\hatd$. The global minimax problem we would ideally like to solve is then 
\[\min_{\tilde{K} \subseteq N} \max_{\delta \in \Delta(\hatd,\gamma)} \mathbb{E}_{D \sim \delta}[Loss(\tilde{K} \setminus D)].\]

\textit{Key challenge 1.} The first challenge here is that the set of all possible dropout sets $D$ is exponential, meaning the inner expectation is a sum over \textit{exponentially many} possible events. Assos et al. contend with the same problem when optimizing $\min_{\tilde{K} \subseteq N} \mathbb{E}_{D \sim \hatd}[Loss(\tilde{K} \setminus D)]$ \emdash our problem without the inner max\emdash and they deal with it using ERM: they empirically estimate the distribution over $D$ via sampling (showing that this estimate is good enough via uniform convergence), and then using integer programming to select $\tilde{K}$. We cannot use ERM due to the added maximization step, because there is \textit{no fixed distribution to sample} --- we need to simultaneously optimize our panel (an integer problem) and the dropout distribution playing against it (a continuous problem).

\textbf{Contribution I: A sortition algorithm that is near-optimally $\gamma$-robust and $(\alpha,\beta)$\textsc{-Equal}.} We circumvent the intractability of this bilevel mixed-integer optimization problem by first solving its \textit{continuous relaxation}, and then dependently rounding the resulting fractional panel. Our algorithm thus proceeds in two steps: first, it solves the \textit{fractional version} of this problem optimally, and then it rounds this fractional panel into the panel $\tilde{K}$.

\textit{\textbf{Step 1: Compute a fractional panel.}} We compute the \textit{fractional panel} $\hatp \in [0,1]^n$ that solves the following minmax optimization problem.
\begin{center}
    $\min_{\pvec}\max_{\delta \in \Delta(\hatd,\gamma)} \mathbb{E}_{D \sim \delta} [\textit{Loss}(\pvec \setminus D)] \quad  \text{\textbf{s.t.}} \quad \pvec \in [\alpha,\beta]^n.$
\end{center} 
Here, $\pvec \setminus D$ represents the (fractional) panel post-dropout. Note that this is exactly the continuous relaxation (i.e., pool members are treated as divisible) of our overall set selection problem, up to one addition: we require that all agents are fractionally included to a degree within $[\alpha,\beta]$.

In \Cref{sec:algos}, we give an algorithm, \textsc{MinMax-Optimizer} (\Cref{alg:min_max_response}), for solving this minmax problem. This algorithm plays two subroutines against each other over $T$ rounds: in each round $t$, the \textsc{Minimizer} (\Cref{alg:pgd}) uses projected gradient descent to design a fractional panel $\pvec^t$, and then the \textsc{Maximizer} (\Cref{alg:ellipsoid_br}) best-responds with the distribution over dropout sets $\delta^t \in \Delta(\hatd,\gamma)$ that maximizes loss against $\pvec^t$. We prove that the \textit{average} fractional panel $\pvec^t$ produced over iterates $t \in [T]$ converges to the optimal fractional panel $\pvec^*$ at a rate $\sqrt{n/T}$ (\Cref{thm:no-regret}).

\textit{Key challenge 2.} The key technical difficulty is that the inner maximization problem over the dropout distribution $\delta$\emdash which must be solved by the \textsc{Maximizer} algorithm\emdash is defined over distributions with exponentially large support (all possible dropout sets $D$). This maximization problem can be expressed as a linear program with $O(n)$ constraints (one for each of the marginal probability upper and lower bounds), but $2^n$ variables (one for each dropout set $D$). 

We show that the inner maximization problem can nevertheless be solved efficiently by instead solving the \emph{dual} linear program (which has $O(n)$ variables but $2^n$ constraints) using the Ellipsoid algorithm with a polynomial-time separation oracle. Informally this is because our objective is the maximum of only polynomially many \emph{modular} functions in $D$, whose maximizer must be the maximizer of at least one of the constituent modular functions. The set-valued maximizer for a modular function can be computed as a simple thresholding, and our separation oracle amounts to finding the optimizer for each of the constituent modular components of the objective and then taking the maximum over them. Finally we show that that an optimal primal solution can be extracted from the optimal dual solution. Together this gives an efficient algorithm to solve the fractional minimax problem, despite the fact that the ``strategy space of the maximization player'' is exponentially large.

\textit{\textbf{Step 2: Dependently round.}} We then dependently round our optimal fractional panel $\hatp$ into the panel $\tilde{K}$. For this, we use a known polynomial-time dependent rounding procedure, \textit{Pipage Rounding} \citep{gandhi2006dependent}. In \Cref{section:formal-guarantees-new}, we apply \textit{Pipage}'s marginal preservation and negative association properties to prove bounds on our algorithm's achievement of ideals \textsc{I-III}.

\textbf{Contribution II: Real-world trade-offs and algorithm performance.} Our algorithms' optimality (at least in the fractional problem) allows us to explore trade-offs in this problem that were previously inaccessible. In \Cref{section:experiments}, we ask: \textit{Is it worse to be overly robust when $\tilde{d}$ is a good estimate, or non-robust when $\tilde{d}$ is a poor estimate?}, and \textit{What is the cost to robustness and loss of achieving stronger $(\alpha,\beta)$\textsc{-Equality}?} We characterize these trade-offs, finding that overestimating the power of the adversary (i.e., being too robust) tends to be \textit{costlier} than being not robust enough. We also find that tightening $[\alpha,\beta]$ has a roughly linear effect on the level of loss achievable. Finally, our algorithms' loss and that of \textit{ERM-Benchmark} against adversaries of different strengths (i.e., values of $\gamma$).

\subsection{Additional Related Work} \label{sec:relatedwork}
Our algorithmic approach is new to the sortition literature, but the peripheral tools and ideas we use track this literature closely. Ideals \textsc{I-III} have been pursued in various forms 
\citep{halpern2025federated,flanigan2020neutralizing,flanigan2021fair,flanigan2021transparent,ebadianboosting,EKMP+22,baharav2024fair,assos2025alternates}; the dependent rounding procedure has been applied to the sortition problem (though for rounding a different object) \citep{flanigan2021transparent}; prior work has also studied the continuous relaxation of the panel selection problem \citep{flanigan2020neutralizing,flanigan2024manipulation}; and column generation is a common approach in the sortition literature for designing distributions over intractable set spaces \cite{flanigan2021fair}.

The technique we use to solve the minimax optimization problem of playing a ``no-regret learning algorithm'' against a ``best-response algorithm'' originates with \cite{freund1999adaptive}. This core algorithmic technique has been widely used to solve constrained optimization problems with exponentially-many constraints (as in our case) --- see e.g. examples in differential privacy \citep{gaboardi2014dual}, algorithmic fairness \citep{kearns2018preventing}, and conditional calibration \citep{haghtalab2023unifying}. In all of these cases, the technique requires applying an appropriate no-regret learning algorithm (we use online gradient descent \citep{zinkevich2003online}) and then giving an efficient algorithm to solve a ``best response'' problem (in our case, our algorithm for computing the worst-possible dropout distribution given a fractional panel). The details of these solutions typically differ in which algorithmic techniques are used to solve the ``best response'' problem for the player that has the large strategy space: in our case this manifests itself in our solution to the dropout player's best response problem using Ellipsoid on the dual of their best response problem. 

The set of dropout distributions we solve the minimax optimization problem over are specified by linear moment constraints on marginals and expected size, which is related to strategies used for distributionally robust optimization (DRO). Classical and modern DRO results (moment sets, $\varphi$-divergences, and Wasserstein balls) typically study convex sample-average losses; in contrast, we have a  combinatorial objective defined via quotas; this is what necessitates our no-regret vs. best response style algorithm rather than using a closed-form dual reformulation. \citep{DelageYe2010,WiesemannKuhnSim2014,EsfahaniKuhn2018,RahimianMehrotra2019}.

\section{Model}
\label{section:model-new}
\textbf{Instance.} An instance of our problem $\inst=(N,k,\boldsymbol{\ell},\mathbf{u})$ consists of a \textit{pool} $N$ (a set of agents) of size $n$, a panel size $k\in\mathbb{N}_{>0}$, and \textit{quotas}. These quotas are imposed on \textit{feature-value} pairs, defined as follows: let the set of possible values of each feature $f \in F$ be $V_f$, and let
$\FV := \{(f,v): f\in F,\; v\in V_f\}$ be the set of all feature-value pairs. As an simple example, we could have feature set $F = \{\text{age, height}\}$ with $V_{\text{height}} = \{\text{short}, \text{tall}\}$; a feature-value pair would be (height,tall). Then, a pair of lower and upper quotas on $(f,v)$ ad defined respectively as $\ell_{f,v},u_{f,v} \in \mathbb{N}$, where $1 \leq \ell_{f,v} \leq u_{f,v}$. We summarize the quotas as $\boldsymbol{\ell} = (\ell_{f,v} | f,v \in FV)$ and $\mathbf{u} = (u_{f,v} | f,v \in FV)$.

Let $f : N \to V_f$ so that $f(i)$ is agent $i$'s value for feature $f$, and let $N_{f,v} := \{ i\in N : f(i)=v\}$ be the set of pool members with feature-value $(f,v)$. Then, a\textit{ panel} $K\subseteq N$ \textit{satisfies} the quotas if $|K\cap N_{f,v}|\in[\ell_{f,v},u_{f,v}]$ for all $(f,v)\in\FV$.

\textbf{Panel size as a quota.} We encode the panel size constraint via an extra feature $f^k$ with a single value $V_{f^k} = \{1\}$. We set $\ell_{f^k,1}=u_{f^k,1}=k$, and implicitly include $(f^k,1)$ in $\FV$ and its quotas in $\boldsymbol{\ell},\mathbf{u}$.

\textbf{Selection probabilities.} A \textit{selection algorithm} is any procedure for taking in an instance $(N,k, \boldsymbol{\ell},\mathbf{u})$ and outputting a panel $K$. Any selection algorithm induces \textit{selection probabilities} $\boldsymbol{\pi}=(\pi_i)_{i\in N}$, which we define such that $\pi_i=\Pr[i\in \tilde{K}]$ is the chance of being chosen for the \emph{initial} panel $\tilde{K}$.\footnote{We define selection probabilities as the chance of being chosen for the \textit{initial panel}, rather than being chosen \textit{and not dropping out}. This choice is most faithful to the known relationship between selection probabilities and manipulation \citep{flanigan2024manipulation}, and reflects the sortition principle of \textit{equality of opportunity} \citep{carson1999random}}

\subsection{Distributional Dropout Adversary}
\label{subsec:dist-adv}
Let $\hatd\in[0,1]^n$ be estimated per-agent dropout probabilities and fix a robustness radius $\gamma\in[0,1]$. Instead of assuming independent dropouts across agents, we allow the adversary to select an \emph{arbitrary dropout distribution} $\delta$ over dropout sets $D\subseteq N$, subject to linear marginal constraints:
\begin{align}
\label{eq:delta-feasible}
&\Delta(\hatd,\gamma) := \Big\{\,\delta\in\Delta(2^N):\; \sum_{i} \Pr[i\in D]=\sum_i \tilde d_i \ \land \notag \\
& \forall i, \Pr_{D\sim\delta}[i\in D]\in[\max(0,\tilde d_i-\gamma),\min(1,\tilde d_i+\gamma)] \Big\} 
\end{align}
The first equality constraint fixes the \emph{expected number} of dropouts in the pool to match $\hatd$; the second bounds the marginal deviation of $\dvec$ from $\hatd$ within a $\gamma$-ball. 

This formulation allows an adversary to optimize over the set of all (possibly correlated) distributions over dropouts that satisfy marginal dropout probabilities constrained to be within a $\gamma$-ball around the given marginal probability vector --- a more powerful adversary than one who optimizes over only independent Bernoulli dropouts. 
\subsection{Loss of a (Fractional) Panel}
For any fractional panel $\pvec$, dropout set $D\subseteq N$, and feature-value pair $(f,v) \in FV$, let $S_{f,v}$ be the number of fractional panelists with feature value pair $(f,v)$ post-dropout: 
\begin{align}
&S_{f,v}(\pvec;D) := \sum_{i\in N_{f,v}} p_i\, \mathbb{I}(i\notin D),
\intertext{Let the deviations of this post-dropout panel from the quotas on $(f,v)$, normalized by the upper quota be}
&\dev^{-}_{f,v}(\pvec;D) = \tfrac{1}{u_{f,v}}(\ell_{f,v} - S_{f,v}(\pvec;D)),\\
&\dev^{+}_{f,v}(\pvec;D) = \tfrac{1}{u_{f,v}}(S_{f,v}(\pvec;D) - u_{f,v}).
\end{align}
Now, let $h(\pvec;D)$ be the maximum normalized deviation across \textit{all} feature-values. Finally, fixing a dropout distribution $\delta$, let the overall \textit{loss} be the expectation of this maximum deviation over dropouts $D \sim \delta$:
\begin{align}
\label{eq:hD}
h(\pvec;D) &:= \max_{(f,v)\in\FV, \  \dagger \in \{-,+\}} \dev^{\dagger}_{f,v}(\pvec;D)\\
 \mathcal{L}(\pvec,\delta;\inst) &:= \mathbb{E}_{D\sim\delta}\big[\, h(\pvec;D)\,\big].\label{eq:new-loss}
\end{align}

\textbf{Action spaces.} The maximizer acts in $\Delta(\hatd,\gamma)$, as in from~\eqref{eq:delta-feasible}. We constrain the minimizer so they include each agent to a fractional degree within $[\alpha,\beta]$:
\begin{equation}
\label{eq:min-space}
 \mathfrak{P}(\alpha,\beta) := [\alpha,\beta]^n \subseteq [0,1]^n.
\end{equation}
One may want to additionally constrain the fractional panel size as $\|\pvec\|_1=k^+$, for some $k^+ \geq k$\emdash this does not affect the convexity. Our fractional optimization problem can now be expressed as the convex--concave saddle-point-problem:
\begin{equation}
\label{eq:opt-prob}
 \pvec^*:=\arg\min_{\pvec\in \mathfrak{P}(\alpha,\beta)}\; \max_{\delta\in \Delta(\hatd,\gamma)}\; \mathcal{L}(\pvec,\delta;\inst).
\end{equation}

\paragraph{Convexity--Concavity (proof in \Cref{app:convconc}).} For any fixed $\delta$, $\mathcal{L}(\cdot,\delta;\inst)$ is convex in $\pvec$ because $h(D;\pvec)$ is a pointwise maximum of finitely many affine forms in $\pvec$, and expectation preserves convexity. For any fixed $\pvec$, $\mathcal{L}(\pvec,\cdot;\inst)$ is affine (hence concave) in $\delta$ by linearity of expectation. Since $\mathfrak{P}(\alpha,\beta)$ and $\Delta(\hatd,\gamma)$ are convex and compact, Sion's minimax theorem then implies a saddle point exists and the max and min in \Cref{eq:opt-prob} can be exchanged.

\section{Our Optimization Algorithm }\label{sec:algos-new}\label{sec:algos}
We solve the minmax optimization problem in~\eqref{eq:opt-prob} by playing a no-regret minimizer against an exact best response of the maximizer. The minimizer runs projected subgradient descent; the maximizer solves a linear program by running Ellipsoid using an efficient separation oracle on its dual (Appendix~\ref{app:new-adv}).

\subsection{Algorithm}
Our overall algorithm is \Cref{alg:min_max_response}, and its subroutines are Algorithms \ref{alg:pgd} and \ref{alg:ellipsoid_br}.
Let $T\in\mathbb{N}$ and a stepsize $\eta>0$ be given. 

\begin{algorithm}[h]
\caption{\textsc{MinMax-Optimizer}}
\label{alg:minmax-cc}\label{alg:min_max_response}
\begin{algorithmic}[1]
\Require $\inst$, $\hatd$, $\gamma$, $\alpha,\beta$, $\eta$, $T$.
\State $\pvec^0 \leftarrow \{(\alpha+\beta)/2\}_{i=1}^n$; choose feasible $\delta^0\in\Delta(\hatd,\gamma)$
\For{$t=0,1,\dots,T-1$}
  \State $\pvec^{t+1} \leftarrow \textsc{Projected-Subgradient}(\pvec^{t},\delta^{t},\eta)$ 
  \State $\delta^{t+1} \leftarrow \textsc{Best-Response-Ellipsoid}(\pvec^{t+1},\hatd,\gamma)$ 
\EndFor
\State \Return $\hat{\pvec} := \frac{1}{T}\sum_{t=1}^T \pvec^t$
\end{algorithmic}
\end{algorithm}

\textbf{Minimizer:} \textsc{Projected-Subgradient.} Note that for fixed $\delta$, the map $\pvec\mapsto \mathcal{L}(\pvec,\delta;\inst)$ is convex and piecewise linear. A single gradient step proceeds as follows. For each dropout set in the support of $\delta$ we calculate the sub-gradient in accordance with the loss function, update $\pvec^t$ with the final gradient and clip the final probabilities to be in our action space $[\alpha, \beta]$. We defer the full details of this algorithm to the Appendix (\cref{alg:pgd}). 

\textbf{Maximizer:} \textsc{Best-Response-Ellipsoid}. \ For fixed $\pvec$, the best response problem for the adversary is to solve the following linear program:
\begin{align*}
\max_{\delta\ge 0}\; & \sum_{D\subseteq N} \delta(D)\, h(D;\pvec)\\
\text{s.t. } & \sum_{D\subseteq N} \delta(D)=1, \ \ \land \ \ \sum_{D\subseteq N} |D|\, \delta(D) = c  \land \ \ \sum_{D\ni i} \delta(D) \in [m_i, M_i] \; \forall i.
\end{align*}
with $m_i=\max(0,\tilde{d}_i-\gamma)$, $M_i=\min(1,\tilde{d}_i+\gamma)$,
The dual linear program (below) has the following variables, corresponding to the constraints in the primal: $\lambda\in\R$ (normalization), $\tau\in\R$ (expected size), and $a_i,b_i\ge 0$ (upper,lower marginals). 
\begin{align*}
\min_{\lambda,\tau,\mathbf{a},\mathbf{b}\ge 0}\; & \lambda + c\,\tau + \sum_i M_i a_i - \sum_i m_i b_i\\
\text{s.t. } & \lambda + \tau |D| + \sum_{i\in D} (a_i - b_i) \ge h(D;\pvec) \quad \forall D\subseteq N.
\end{align*}
We show that this best-response problem can be solved in polynomial time via the Ellipsoid method applied to the dual, using a polynomial-time separation oracle. This separation oracle evaluates $\max_{D\subseteq N}\{h(D;\pvec)-\sum_{i\in D} (a_i - b_i + \tau)\}$ in $O(|\FV|\,n)$ time by enumerating the $2|\FV|$ linear pieces and thresholding. Ellipsoid yields a dual optimum in polynomial time (Theorem~\ref{thm:ellipsoid-dual}); we then recover a primal optimum by solving the primal restricted to the polynomial set of subsets returned during Ellipsoid (Theorem~\ref{thm:primal-recovery}).

\begin{algorithm}[h]
\caption{\textsc{Best-Response-Ellipsoid}$(\pvec,\hatd,\gamma)$}
\label{alg:ellipsoid_br}
\begin{algorithmic}[1]
  \State Initialize collection of subsets $\mathcal{C} \leftarrow \emptyset$.
  \State Run Ellipsoid on the dual variables $(\lambda,\tau,\mathbf{a},\mathbf{b})$, using the separation oracle that returns a maximizing $D$ for $h(D;\pvec)-\sum_{i\in D} (a_i - b_i + \tau)$.
  \State Each time the oracle finds a violated constraint, add the returned $D$ to $\mathcal{C}$.
  \State After Ellipsoid terminates with a feasible dual, solve the primal LP restricted to variables $\{\delta(D): D\in\mathcal{C}\}$ subject to the original constraints.
  \State \Return the optimal restricted primal solution (extended by zeros off $\mathcal{C}$). This is optimal for the full primal (Theorem~\ref{thm:primal-recovery}).
\end{algorithmic}
\end{algorithm}

\subsection{Convergence and Efficiency}

We next prove two key properties of \Cref{alg:min_max_response}: it converges to optimality as $T$ grows (\Cref{thm:no-regret}), and it is computationally efficient (\Cref{Thm:poly-time}). 
\begin{theorem}[Optimality]\label{thm:no-regret} Fix algorithm inputs $\inst,\hatd,\alpha,\beta,\gamma,\eta,T$; let $\hatp$ be the output of \Cref{alg:min_max_response} on these inputs; and let $\pvec^*$ be the optimal solution for these inputs, as in \Cref{eq:opt-prob}. Then,
\[\max_{\delta\in\Delta(\hatd,\gamma)} \mathcal{L}(\hat{\pvec},\delta;\inst) \;\le\; \max_{\delta\in\Delta(\hatd,\gamma)} \mathcal{L}(\pvec^*,\delta;\inst) \;+
O\!\left(\sqrt{\tfrac{n}{T}}\right).
\]

\end{theorem}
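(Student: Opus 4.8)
The plan is to use the standard Freund--Schapire reduction from minimax equilibrium computation to online learning \citep{freund1999adaptive}: since the \textsc{Minimizer} runs a no-regret algorithm (projected online gradient descent, \citep{zinkevich2003online}) while the \textsc{Maximizer} exactly best-responds, the time-averaged minimizer iterate $\hatp$ must be an approximate minimax solution, with approximation error equal to the minimizer's average regret. Write $V^* := \max_{\delta\in\Delta(\hatd,\gamma)}\mathcal{L}(\pvec^*,\delta;\inst)$ for the saddle value; by the convexity--concavity established above and Sion's theorem, $V^* = \min_{\pvec\in\mathfrak{P}(\alpha,\beta)}\max_{\delta}\mathcal{L}(\pvec,\delta;\inst)$. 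The goal is thus to show $\max_\delta \mathcal{L}(\hatp,\delta;\inst)\le V^* + O(\sqrt{n/T})$.

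The argument proceeds through two inequalities. First, because $\mathcal{L}(\cdot,\delta;\inst)$ is convex in $\pvec$ for every fixed $\delta$ (a pointwise expectation of maxima of affine forms), Jensen's inequality applied to $\hatp=\tfrac1T\sum_t\pvec^t$ gives, for each $\delta$, $\mathcal{L}(\hatp,\delta;\inst)\le\tfrac1T\sum_{t=1}^T\mathcal{L}(\pvec^t,\delta;\inst)$. Maximizing over $\delta$ and using that the max of an average is at most the average of the maxes,
\[
\max_{\delta}\mathcal{L}(\hatp,\delta;\inst)\;\le\;\frac1T\sum_{t=1}^T\max_{\delta}\mathcal{L}(\pvec^t,\delta;\inst)\;=\;\frac1T\sum_{t=1}^T\mathcal{L}(\pvec^t,\delta^t;\inst),
\]
where the final equality is exactly the \emph{best-response} property: each $\delta^t$ returned by \Cref{alg:ellipsoid_br} attains $\max_\delta\mathcal{L}(\pvec^t,\delta;\inst)$ (relying on the exactness guaranteed by Theorems~\ref{thm:ellipsoid-dual} and \ref{thm:primal-recovery}). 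Second, treating $\ell_t(\pvec):=\mathcal{L}(\pvec,\delta^t;\inst)$ as the loss revealed at round $t$, projected gradient descent's regret guarantee against the fixed comparator $\pvec^*$ yields $\sum_{t=1}^T\mathcal{L}(\pvec^t,\delta^t;\inst)\le\sum_{t=1}^T\mathcal{L}(\pvec^*,\delta^t;\inst)+R_T$; since each $\mathcal{L}(\pvec^*,\delta^t;\inst)\le V^*$, dividing by $T$ gives $\tfrac1T\sum_t\mathcal{L}(\pvec^t,\delta^t;\inst)\le V^*+R_T/T$. Chaining the two inequalities proves the theorem with error term $R_T/T$.

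It remains to bound the regret $R_T$. Online gradient descent over a convex domain of Euclidean diameter $D$ against subgradients of norm at most $G$ incurs $R_T\le \tfrac{D^2}{2\eta}+\tfrac{\eta}{2}G^2T$, which the tuned step size makes $O(DG\sqrt T)$. Here the domain $\mathfrak{P}(\alpha,\beta)=[\alpha,\beta]^n$ has $D=(\beta-\alpha)\sqrt n=O(\sqrt n)$. For the subgradients, note $\ell_t$ is an expectation over $D\sim\delta^t$ of the piecewise-linear map $h(\cdot;D)$; a subgradient of $h(\cdot;D)$ is the gradient of its active piece $\dev^{\dagger}_{f,v}$, which has entries $\pm 1/u_{f,v}$ on $N_{f,v}\setminus D$ and zero elsewhere, so its norm is controlled by the quota normalization. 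Bounding $G=O(1)$ and choosing $\eta=\Theta(\sqrt{n/T})$ gives $R_T=O(\sqrt{nT})$ and hence $R_T/T=O(\sqrt{n/T})$, as claimed.

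The conceptual core\,---\,the two inequalities above\,---\,is standard once the primitives are in place; I expect the genuine obstacles to be (i) establishing that the \textsc{Maximizer}'s move is an \emph{exact} best response, so that the equality in the first display holds (this is where the Ellipsoid-on-the-dual machinery and primal recovery are essential, and why an only-approximate oracle would inject an additional additive error term); and (ii) the quantitative subgradient bound that fixes the dimension dependence of the rate, since the combinatorial $\max$ over feature-value deviations and the normalization by $u_{f,v}$ determine $G$ and thus whether the final rate is genuinely $\sqrt{n/T}$. A minor bookkeeping point is the arbitrary initial $\delta^0$: since it is never used as a best response, one either averages over rounds $t\ge 1$ (as written) or absorbs a single $O(1/T)$ boundary term, which is dominated by $R_T/T$.
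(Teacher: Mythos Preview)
Your proposal is correct and follows essentially the same Freund--Schapire argument as the paper: convexity plus Jensen to pass from $\hat{\pvec}$ to the time average, the best-response property to identify $\max_\delta\mathcal{L}(\pvec^t,\delta)=\mathcal{L}(\pvec^t,\delta^t)$, and the projected-subgradient regret bound against the comparator $\pvec^*$. The only minor discrepancy is the subgradient-norm bound: the paper takes $G\le\sqrt{n}$ (each coordinate bounded by $1/u_{f,v}\le 1$) rather than your $G=O(1)$, though both arrive at the stated $O(\sqrt{n/T})$ rate with comparable looseness in the final arithmetic.
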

\begin{proof}[Proof sketch]
Since $\mathcal{L}(\cdot,\delta)$ is convex and has bounded subgradients (each coordinate bounded by $1/u_{f,v}\le 1$), projected subgradient descent enjoys $O(\sqrt{n/T})$ regret. Combining standard no-regret-to-saddle-point arguments for convex--concave games (e.g., \cite{freund1999adaptive}) with the exact best responses yields the bound. See Appendix \ref{app:proof-no-regret} for the formal proof.
\end{proof}

\begin{theorem}[Efficiency]\label{Thm:poly-time} \textsc{MinMax-Optimizer} is polynomial time.
\end{theorem}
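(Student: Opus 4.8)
The plan is to bound the total running time of \textsc{MinMax-Optimizer} (\Cref{alg:min_max_response}) by decomposing it into its constituent costs: the number of outer iterations $T$, the cost of a single \textsc{Projected-Subgradient} step (\Cref{alg:pgd}), and the cost of a single \textsc{Best-Response-Ellipsoid} call (\Cref{alg:ellipsoid_br}). Since the outer loop runs exactly $T$ times, it suffices to show that each subroutine runs in polynomial time and that the per-iteration data (in particular the support of each $\delta^t$) stays polynomially bounded, so that costs do not compound across iterations. First I would fix $T$ to the value dictated by \Cref{thm:no-regret}: to drive the additive $O(\sqrt{n/T})$ term below any target accuracy $\epsilon$, it suffices to take $T = O(n/\epsilon^2)$, which is polynomial in the input size, so the number of outer iterations is not an obstacle.

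Next I would bound the cost of the \textsc{Maximizer} step, which is the crux. By the discussion preceding \Cref{alg:ellipsoid_br}, the adversary's best-response is a linear program with $O(n)$ dual variables $(\lambda,\tau,\mathbf{a},\mathbf{b})$ and $2^n$ dual constraints (one per $D \subseteq N$). I would invoke the polynomial-time solvability of the dual via the Ellipsoid method (\Cref{thm:ellipsoid-dual}), whose running time is polynomial provided a polynomial-time separation oracle exists. The separation oracle must evaluate $\max_{D\subseteq N}\{h(D;\pvec)-\sum_{i\in D}(a_i - b_i + \tau)\}$; the key observation (already stated in the excerpt) is that $h(D;\pvec)$ is the pointwise maximum of the $2|\FV|$ affine-in-$D$ (indeed modular) functions $\dev^{\pm}_{f,v}$, so the maximizing $D$ for each fixed linear piece is obtained by a simple coordinatewise thresholding in $O(n)$ time, and taking the best over the $2|\FV|$ pieces gives the oracle in $O(|\FV|\,n)$ time. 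Because Ellipsoid queries the oracle only polynomially many times, the collection $\mathcal{C}$ of returned subsets has polynomial size; the final primal-recovery step then solves an LP restricted to the $|\mathcal{C}|$ variables $\{\delta(D):D\in\mathcal{C}\}$, which is a polynomial-size LP and hence solvable in polynomial time, yielding an optimal (polynomially-supported) $\delta^{t+1}$ by \Cref{thm:primal-recovery}.

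I would then bound the \textsc{Minimizer} step. A single \textsc{Projected-Subgradient} update computes a subgradient of $\pvec \mapsto \mathcal{L}(\pvec,\delta^t;\inst) = \mathbb{E}_{D\sim\delta^t}[h(\pvec;D)]$ and projects onto the box $[\alpha,\beta]^n$. Crucially, because $\delta^t$ is supported on the polynomially-sized set $\mathcal{C}$ returned by the previous Ellipsoid call, the expectation is a sum over polynomially many dropout sets; for each such $D$ the active affine piece of $h(\cdot;D)$ and its gradient are computed in $O(|\FV|\,n)$ time, so the whole subgradient costs $\mathrm{poly}(n,|\FV|,|\mathcal{C}|)$. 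Projection onto the box is a coordinatewise clip to $[\alpha,\beta]$ in $O(n)$ time. Multiplying the per-iteration cost by $T$ iterations gives an overall running time polynomial in $n$, $|\FV|$, and $1/\epsilon$, establishing the claim.

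The main obstacle is the \textsc{Maximizer} step, since its strategy space $\Delta(\hatd,\gamma)$ has exponentially large support and the naive primal LP has $2^n$ variables. The essential technical content is therefore the separation-oracle argument: one must verify that passing to the dual turns the exponential variable count into an exponential constraint count that is tractable via Ellipsoid, that the objective's decomposition into polynomially many modular pieces makes separation a thresholding computation, and that primal recovery over the polynomially many generated subsets returns a genuinely optimal adversary distribution. I would lean on \Cref{thm:ellipsoid-dual} and \Cref{thm:primal-recovery} for the correctness of these two pieces and assemble them, together with the elementary bounds on $T$ and on the minimizer step, into the final polynomial-time guarantee.
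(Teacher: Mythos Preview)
Your proposal is correct and follows essentially the same approach as the paper: decompose the cost into the $T$ outer iterations, bound the \textsc{Maximizer} step via Ellipsoid on the dual with the $O(|\FV|\,n)$ separation oracle and polynomial-size primal recovery (invoking \Cref{thm:ellipsoid-dual} and \Cref{thm:primal-recovery}), and bound the \textsc{Minimizer} step using that $\mathrm{supp}(\delta^t)$ is polynomial because Ellipsoid adds at most one $D$ per iteration. Your additional remark on choosing $T=O(n/\epsilon^2)$ for a target accuracy is a reasonable elaboration, though the paper treats $T$ as a given input and simply bounds the per-iteration cost.
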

\begin{proof}[Proof Sketch]
Each \textsc{Projected-Subgradient} iteration computes $\pvec^{t+1}$ in $O(|\mathrm{supp}(\delta^t)|\,|\FV|\,n)$ time (where $\mathrm{supp}(\delta^t)$ must be polynomial-size because exactly one $D$ is added to the support for each Ellipsoid iteration). Then, \textsc{Best-Response-Ellipsoid} computes $\delta^{t+1}$ in time polynomial in $(n,|\FV|,L)$ via Ellipsoid on the dual using the $O(|\FV|\,n)$ separation oracle, followed by solving a polynomial-size restricted primal (Theorems~\ref{thm:ellipsoid-dual} and~\ref{thm:primal-recovery}). Hence, \textsc{MinMax-Optimizer} is polynomial time.
\end{proof}

\section{Guarantees on Ideals I--III} 
\label{section:formal-guarantees-new}\label{section:formal-guarantees}

\textbf{Our selection algorithm.} Our overall selection algorithm, which we call  \textsc{MinMax-Pipage}, takes as input an instance $\inst$ along with $\gamma,\alpha,\beta$. It first runs \textsc{MinMax-Optimizer} to produce an optimal fractional panel $\hatp$; then, we round $\hatp$ into an initial panel $\tilde{K}$ via \textit{Pipage Rounding}. At a high level, Pipage Rounding iteratively adjusts the fractional elements of $\hatp$ while preserving feasibility until all entries become integral, forming $\tilde{K}$ \citep{gandhi2006dependent}. We note its key properties when we apply them below.

We first bound the extent to which \textsc{MinMax-Pipage} satisfies \textsc{I.~Panel Size} and \textsc{II.~Representation} (\Cref{thm:quotas-new}). These ideals are evaluated on the post-dropout panel \(\tilde K\setminus D\) in expectation over \(D\sim\delta\), where \(\delta\in\Delta(\hatd,\gamma)\) is the adversarial distribution against $\hatp$. Formally, we give a high-probability bound (over the randomness of \textit{Pipage}) on the per-$(f,v)$ increase in quota deviation between $\pvec^*$, the optimal \textit{fractional} panel, and our integer panel $\tilde{K}\setminus D$. 

\begin{theorem}[Representation and Panel Size]\label{thm:quotas-new}
Fix inputs \(\inst,\hatd,\gamma,\alpha,\beta,\eta,T\), and let \(\tilde K\) be the (random) panel produced by \textsc{MinMax-Pipage}.
Fix any $\delta \in \Delta(\hatd,\gamma)$, any $(f,v) \in FV$, and any $z > 0$. With probability at least $1-2\exp\left(-\tfrac{2z^2}{\sigma^2_{f,v}}\right)$,
\begin{align*}
&\max_{\dagger \in \{-,+\}}\E_{D\sim\delta}\left[\dev^{\dagger}_{f,v}(\tilde{K};D)\right]
\leq \mathcal{L}(\pvec^*,\delta;\inst) + \epsilon + \tfrac{z}{u_{f,v}},
\end{align*}
\begin{center}
    where  $\sigma^2_{f,v} = \sum_{i \in N_{f,v}} \Pr_{D\sim\delta}[i \notin D]^2$, $\epsilon = O\left(\sqrt{\frac{n}{T}}\right)$.

\end{center}
\end{theorem}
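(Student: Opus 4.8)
The plan is to bound the integer panel's expected per-$(f,v)$ deviation by the fractional panel $\hatp$'s deviation (paying a concentration price $z/u_{f,v}$ for \emph{Pipage}'s randomness), and then to bound $\hatp$'s deviation by the saddle-point loss $\mathcal{L}(\pvec^*,\delta;\inst)$ (paying the optimality price $\epsilon$ via \Cref{thm:no-regret}). There are two independent sources of randomness to keep separate: the rounding $\hatp\mapsto\tilde K$ and the dropouts $D\sim\delta$. Write $X_i:=\mathbb{I}(i\in\tilde K)$ and $q_i:=\Pr_{D\sim\delta}[i\notin D]$. The first observation is that, since the rounding and $D$ are independent, taking $\E_{D\sim\delta}$ first collapses the post-dropout count into a single sum over the rounding indicators: $\E_{D\sim\delta}[S_{f,v}(\tilde K;D)]=\sum_{i\in N_{f,v}} q_i X_i=:Y$, a function of the \emph{Pipage} randomness alone. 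By \emph{Pipage}'s marginal-preservation property $\E[X_i]=\hat p_i$, so $\E_{\mathrm{Pipage}}[Y]=\sum_{i\in N_{f,v}} q_i\hat p_i=\E_{D\sim\delta}[S_{f,v}(\hatp;D)]$, the corresponding fractional quantity.

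The core step is to concentrate $Y$ around its mean. \emph{Pipage} guarantees the indicators $\{X_i\}_{i\in N_{f,v}}$ are negatively associated; since each $q_i\ge 0$, the coordinatewise monotone maps $X_i\mapsto q_iX_i\in[0,q_i]$ preserve negative association, so the Chernoff--Hoeffding moment bounds valid for independent variables continue to apply to $Y$. The two-sided Hoeffding inequality then gives $\Pr[\,|Y-\E Y|\ge z\,]\le 2\exp(-2z^2/\sum_{i\in N_{f,v}} q_i^2)$, and $\sum_{i\in N_{f,v}} q_i^2=\sum_{i\in N_{f,v}}\Pr_{D\sim\delta}[i\notin D]^2=\sigma^2_{f,v}$ is exactly the stated variance proxy. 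On the good event $\{|Y-\E Y|\le z\}$ (which controls both signs at once, hence the factor $2$), dividing through by $u_{f,v}$ shows that $\E_{D\sim\delta}[\dev^{+}_{f,v}(\tilde K;D)]$ and $\E_{D\sim\delta}[\dev^{-}_{f,v}(\tilde K;D)]$ each exceed their $\hatp$-counterparts by at most $z/u_{f,v}$, so $\max_{\dagger}\E_{D\sim\delta}[\dev^{\dagger}_{f,v}(\tilde K;D)]\le\max_{\dagger}\E_{D\sim\delta}[\dev^{\dagger}_{f,v}(\hatp;D)]+z/u_{f,v}$.

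It remains to bound the fractional term by $\mathcal{L}(\pvec^*,\delta;\inst)+\epsilon$. Because $h(\pvec;D)$ is the pointwise maximum over all $(f,v)$ and $\dagger$, each single deviation is dominated, $\dev^{\dagger}_{f,v}(\hatp;D)\le h(\hatp;D)$; taking $\E_{D\sim\delta}$ and then the max over $\dagger$ yields $\max_{\dagger}\E_{D\sim\delta}[\dev^{\dagger}_{f,v}(\hatp;D)]\le\mathcal{L}(\hatp,\delta;\inst)$. Finally, since $\hatp$ is the near-optimal average-iterate panel and $\delta$ is the adversary's best response against it, \Cref{thm:no-regret} gives $\mathcal{L}(\hatp,\delta;\inst)=\max_{\delta'}\mathcal{L}(\hatp,\delta';\inst)\le\max_{\delta'}\mathcal{L}(\pvec^*,\delta';\inst)+\epsilon$ with $\epsilon=O(\sqrt{n/T})$, and identifying the saddle value $\max_{\delta'}\mathcal{L}(\pvec^*,\delta';\inst)$ with $\mathcal{L}(\pvec^*,\delta;\inst)$ at the worst-case $\delta$ closes the chain. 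Composing the three inequalities proves the bound on the good event, whose probability is at least $1-2\exp(-2z^2/\sigma^2_{f,v})$.

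The main obstacle is the concentration step: the whole argument hinges on noticing that averaging over $D$ \emph{first} turns the post-dropout count into a single negatively associated weighted sum of the rounding indicators, so that one application of Hoeffding-for-NA delivers precisely the proxy $\sigma^2_{f,v}$ and simultaneously handles both quota directions. A secondary point requiring care is the per-$\delta$ use of \Cref{thm:no-regret}: its guarantee is stated for the worst-case adversary, so one must either take $\delta$ to be the adversarial best response (as in the theorem's setup) or invoke convexity of $\mathcal{L}(\cdot,\delta)$ together with the maximizer's best-response property to pass from the averaged iterates back to $\hatp$ for the fixed $\delta$.
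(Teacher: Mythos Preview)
Your proposal is correct and follows essentially the same approach as the paper's proof: take $\E_{D\sim\delta}$ first to collapse the post-dropout count into the weighted sum $Y=\sum_{i\in N_{f,v}} q_i X_i$ of the Pipage indicators, apply the NA Hoeffding inequality to get the two-sided bound with variance proxy $\sigma^2_{f,v}$, dominate the per-$(f,v)$ fractional deviation by $\mathcal{L}(\hatp,\delta;\inst)$, and finish with \Cref{thm:no-regret}. You flag the final step (passing from $\max_{\delta'}\mathcal{L}(\pvec^*,\delta';\inst)$ to $\mathcal{L}(\pvec^*,\delta;\inst)$ for the \emph{fixed} $\delta$) more carefully than the paper itself does; the paper's appendix handles it with the same identification, so your caveat that this requires $\delta$ to be the worst-case adversary (as the surrounding text in \Cref{section:formal-guarantees-new} in fact stipulates) is apt.
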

\begin{proof}[Proof Sketch]
The proof proceeds in two steps. First, we bound the gap between $\pvec^*$ and $\hatp$ using \Cref{thm:no-regret}, thus giving the $\epsilon$ term. Then, to bound the gap between $\tilde{K}$ and $\hatp$, we apply two properties of \textit{Pipage Rounding}: it is \textit{marginal-preserving}, i.e., $\Pr[i\in\tilde K]=p_i$ and $\boldsymbol{\pi}=\hatp$; and the indicators $\mathbb{I}(i \in \tilde{K})$ are negatively associated \citep{gandhi2006dependent}. From these properties\emdash with some careful handling of the expectation over $\delta$\emdash we show that the number of panelists on $\tilde{K}$ with feature-value $(f,v)$ concentrates around its expectation, i.e., the fractional number of panelists in this group on $\hat{p}_i$. 
\end{proof}

Finally, $(\alpha,\beta)$-\textsc{Equality} is evaluated on the selection probabilities $\boldsymbol{\pi}^*$ implied by \textsc{MinMax-Pipage}. That $\boldsymbol{\pi}^*$ satisfies this property simply follows from the fact that \textit{Pipage Rounding} is marginal-preserving: for all $i$, $\pi^*_i = \hat{p}_i \in [\alpha,\beta]$, where the last step is due to the minimizer's action space (\Cref{eq:min-space}).
\begin{theorem}[Equality] \label{thm:equality}
 \textsc{MinMax-Pipage} satisfies  $(\alpha,\beta)$\textsc{-Equality}.
\end{theorem}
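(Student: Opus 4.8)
The plan is to chain two facts: (i) the fractional panel $\hatp$ returned by \textsc{MinMax-Optimizer} lies entrywise in $[\alpha,\beta]$, and (ii) \textit{Pipage Rounding} preserves marginals, so the selection probabilities it induces equal exactly the fractional values it rounds. Together these immediately pin each $\pi_i^*$ into $[\alpha,\beta]$, which is precisely the definition of $(\alpha,\beta)$-\textsc{Equality}.

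First I would establish (i). By construction, each iterate $\pvec^{t}$ is produced by \textsc{Projected-Subgradient}, which projects (clips) onto the minimizer's action space $\mathfrak{P}(\alpha,\beta)=[\alpha,\beta]^n$ (\Cref{eq:min-space}); hence $\pvec^t\in[\alpha,\beta]^n$ for every $t\in[T]$. The output $\hatp=\tfrac{1}{T}\sum_{t=1}^T\pvec^t$ is a convex combination of these iterates, and since $[\alpha,\beta]^n$ is convex, $\hatp\in[\alpha,\beta]^n$ as well, i.e.\ $\hat p_i\in[\alpha,\beta]$ for all $i\in N$. Next I would invoke (ii): \textit{Pipage Rounding} is marginal-preserving \citep{gandhi2006dependent}, so when it rounds $\hatp$ into the integral panel $\tilde K$, the inclusion indicator satisfies $\E[\mathbb{I}(i\in\tilde K)]=\hat p_i$. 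Since by definition the selection probability is $\pi_i^*=\Pr[i\in\tilde K]=\E[\mathbb{I}(i\in\tilde K)]$, we obtain $\pi_i^*=\hat p_i$ for every $i$.

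Combining the two steps yields $\pi_i^*=\hat p_i\in[\alpha,\beta]$ for all $i\in N$, establishing the claim. There is no substantive obstacle here: the only point requiring (very light) care is observing that averaging the iterates keeps $\hatp$ inside the box, which follows from convexity of $[\alpha,\beta]^n$ rather than from any per-iterate argument. The marginal-preservation guarantee of \textit{Pipage} is used as a black box exactly as stated in \citep{gandhi2006dependent}, so the theorem is an essentially immediate corollary of the minimizer's constraint set together with that rounding property.
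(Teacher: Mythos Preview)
Your proposal is correct and follows the same reasoning as the paper: the fractional panel lies in $[\alpha,\beta]^n$ by the minimizer's action-space constraint, and \textit{Pipage}'s marginal-preservation then gives $\pi_i^*=\hat p_i\in[\alpha,\beta]$. You have merely spelled out explicitly the convexity step (averaging iterates stays in the box) that the paper leaves implicit.
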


\section{Empirical Evaluation of Algorithms}
\label{section:experiments}

\textbf{Datasets.} We evaluate our algorithms on six real-world datasets, all corresponding to Australian citizens' assemblies. Each dataset contains an instance $\inst = (N,k,\boldsymbol{\ell},\mathbf)$. For each pool member $i \in N$, we have their feature-values $f,v \in FV$, plus two binary indicators: whether they were selected for the panel, and whether or not they dropped out conditional on being selected for the panel. See \Cref{app:data} for details.

\textbf{Algorithm inputs $\hatd,\gamma,\alpha,\beta,\eta,T$.} We estimate $\hatd$ for each instance by fitting the same \textit{independent action} model used in \cite{assos2025alternates}, which assumes each feature-value contributes independently to an agent's dropout probability. Our training data thus consists of agents' features (inputs) and observed dropout behavior (labels). For each instance, we train our model on three other instances (which three varies per instance). See \Cref{app:hatd} for details. We choose 
  $\alpha$ and $\beta$ jointly chosen to fall within a $\kappa \in (0,1]$ fraction of the distance between the average selection probability $k^+/n$ and the extremes 0 and 1, where $k^+ =  k + k \cdot \mathbb{E}_{D \sim \hatd}[|D|]/n$, a slight inflation of the target panel size to offset the average dropout rate. Formally, for $\kappa \in [0,1]$, 
  \[\alpha(\kappa) := (1-\kappa) \cdot k^+/n, \ \  \beta(\kappa) := k^+/n + (1-k^+/n) \kappa.\]
All experiments use $T = 1000$ and $\eta = \sqrt{n/T}$. We show empirical convergence in \Cref{app:convergence}.

\textbf{True dropout probabilities $\dvec$.} While in theory $\delta$ should be computed \textsc{Best-Response-Ellipsoid}, the Ellipsoid algorithm is impractical to implement. We thus restrict the space of dropout distributions to all \textit{product distributions}. Then, our maximizer only needs to consider vectors of marginal dropout probabilities for each pool member $\mathbf{d} \in [0,1]^n$. This restricted best-response problem has a simple, extremely efficient, optimal greedy solution we call \textsc{Greedy-Maximizer}. Details are in \Cref{app:br-greedy}.

In each instance, we sample dropouts independently according to $\dvec$, the output of \textsc{Greedy-Maximizer} with inputs $\inst,\hatd,\gamma',\pvec$. Here, $\gamma'$ reflects the \textit{true} errors in our predictions and $\pvec$ is the panel whose robustness is being evaluated. For \textsc{ERM Benchmark}, $\pvec$ is a 0/1 vector representing an integer panel, and for us $\pvec$ is $\hatp$.

\subsection{Fundamental Tradeoffs} 
We now examine two fundamental trade-offs. We consider only our algorithms, as there \textit{is no} prior work to compare to  here: our algorithm is the first to enable these tradeoffs' exploration. We analyze \textsc{MinMax-Optimizer}'s output $\hatp$ directly to more crisply observe the trade-offs without the noise of rounding. Here, $\gamma$ is the robustness level of our algorithm, and $\gamma'$ is the true power of the adversary.

\textbf{\textit{Robustness} versus \textit{Non-robustness}.} Here, we ask: what is the cost/benefit of being \textit{robust} when $\hatd$ is closer to the truth $\dvec$, versus being \textit{non-robust} when $\hatd$ is far from the truth $\dvec$? We explore this in Fig.~\ref{fig:heatmap} by examining $\mathcal{L}(\hatp,\dvec;\inst)$ at each combination of $\gamma,\gamma' \in \{0,0.15, 0.30, 0.45, 0.60, 0.75\}^2$. We show results are shown for one representative instance; the rest are in \Cref{app:more_heatmaps}. Across instances, we observe that the smallest value in each row lies on the diagonal; this indicates that ideally, $\gamma = \gamma'$ (i.e., we train with knowledge of the degree of errors). Given that $\gamma'$ may not actually be known, we can compare the losses on the left of the diagonal\emdash where we underestimate $\hatd$'s errors $(\gamma < \gamma')$\emdash to those on the right, where we overestimate it $(\gamma > \gamma')$. The leftward values tend to be much smaller, indicating that \textit{it is generally better to \textbf{underestimate} $\hatd$'s error than to overestimate it.}  

\begin{figure}[h]
    \centering
    \begin{minipage}{0.45\linewidth}
        \centering
        \includegraphics[width=\linewidth]{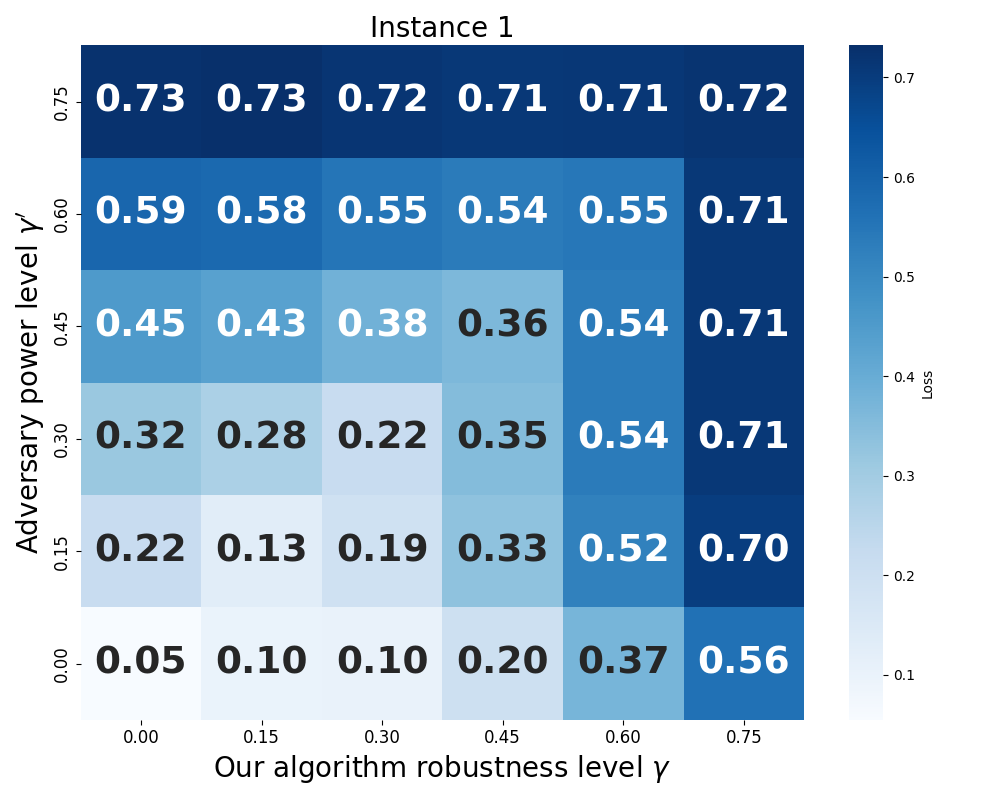}
        \caption{Here we show the loss of $\hatp$ at each combination of $\gamma$ and $\gamma'$ (the adversary's strength we train and test on, respectively) in Instance 1.}
        \label{fig:heatmap}
    \end{minipage}
    \hfill
    \begin{minipage}{0.45\linewidth}
        \centering
        \includegraphics[width=\linewidth]{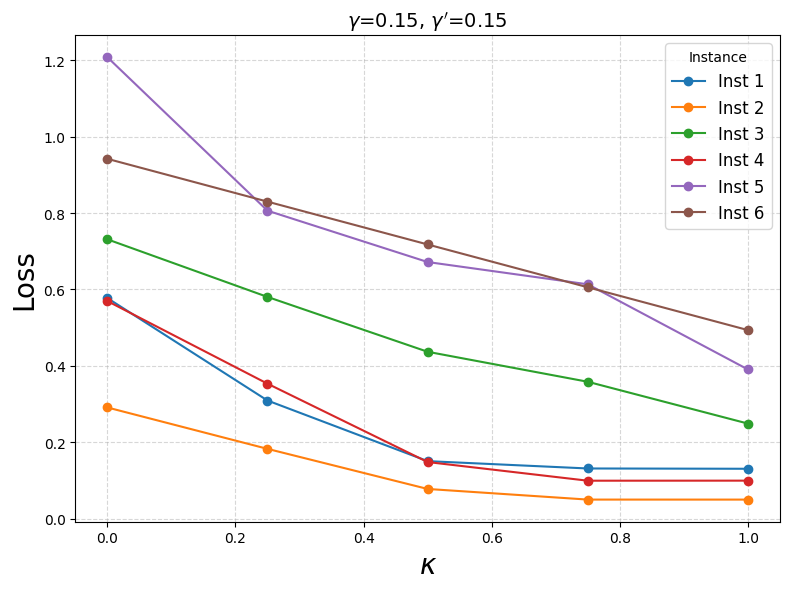}
        \caption{Here we show the loss of $\hatp$ over $\kappa \in \{0,0.25,0.5,0.75,1\}$, i.e., as the range $[\alpha,\beta]$ expands. Errors are chosen to be moderate: $\gamma = \gamma' = 0.15$.}
        \label{fig:kappa_vs_loss}
    \end{minipage}
\end{figure}

\textbf{Equality versus Loss.} Here, we ask: as we require stronger $\alpha,\beta$-equality, how much do we compromise on loss? We test this across instances by gradually increasing the selection probability gap $\kappa$, and thus widening the interval $[\alpha,\beta]$. On the vertical axis is loss $\mathcal{L}(\hatp,\dvec;\inst)$, where $\hatp$, $\dvec$ are derived with $\gamma = \gamma' = 0.15$. In Fig.~\ref{fig:kappa_vs_loss}, we see that across all instances, increasing $\kappa$ decreases the loss roughly linearly with some mild convexity. From $\kappa = 0$ (near equal probabilities) to $\kappa = 1$ (some probabilities equal to 0 or 1), the loss drops by 0.3-0.8, corresponding to the worst quota violation shrinking by 30\%-80\% of its upper quota's magnitude.

\subsection{Comparison of Algorithms on I - III} \label{sec:comparison}
We now compare our algorithm, \textsc{MinMax-Rounding}, to \textsc{ERM-Benchmark} \citep{assos2025alternates}, where \textsc{ERM-Benchmark} is configured to select an entire panel and modified to optimize a notion of loss comparable to ours. We describe our implementation of the benchmark in \Cref{app:erm-benchmark}.

\textsc{I. Panel Size:} In \Cref{tab:panel_size_comparison}, we compare (per instance) the ideal panel size $k$ with the expected post-dropout panel size $E_{D \sim \dvec}[|K \setminus D|]$ for several panels $K$: the fractional optimum $\hatp$, the panel produced by \textsc{MinMax-Pipage} $\tilde{K}$, and the panel produced by \textsc{ERM-Benchmark} $K^{\textsc{ERM}}$. Our algorithms are tested on dropout sets drawn from $\dvec$, the best-response to $\hatp$; \textsc{ERM-Benchmark} is tested on $\dvec$ defined as the best-response to $K^{\text{ERM}}$. We see that our algorithms do far better than \textsc{ERM-Benchmark} on panel size.

\begin{table}[h] 
    \centering
    \caption{Each column (except $k$) is implicitly the expectation over $D \sim \dvec$. Since $\tilde{K}$ is random over rounding, we sample $\tilde{K}$ 100 times and report the mean and standard error. $\kappa =1, \gamma = 0.15, \gamma' = 0.15$. Note that because panel size is encoded as a quota, deviations from optimal size trade off with deviations from other quotas.}
\label{tab:panel_size_comparison}
    \begin{tabular}{l| c c c c}
    \toprule
    \textbf{$\inst$} & \textbf{$k$} & $|\hatp \setminus D|$ & $|\tilde{K} \setminus D|$ & $|K^{\text{ERM}} \setminus D|$ \\
    \midrule
    1 & 51 & 46.2 & 46.5 $\pm$ 0.1 & 50.9 \\
    2 & 52 & 51.7 & 52.0 $\pm$ 0.1 & 54.5 \\
    3 & 41 & 40.3 & 40.5 $\pm$ 0.1 & 45.1 \\
    4 & 50 & 49.4 & 49.2 $\pm$ 0.1 & 54.6 \\
    5 & 47 & 49.5 & 49.9 $\pm$ 0.1 & 56.7 \\
    6 & 49 & 62.4 & 62.8 $\pm$ 0.1 & 67.8 \\
    \bottomrule
    \end{tabular}
\end{table}

\textsc{II. Representation:} In \Cref{fig:benchmark}, we compare the loss of our algorithm (both pre- and post-rounding) with that of \textsc{ERM-Benchmark} against increasingly adversarial dropouts $\gamma' \in \{0,0.3, 0.6, 0.75,1\}$ in Instance 1 (remaining instances are in \Cref{app:more_benchmark}). As expected, we see that our (fractional) algorithms eventually outperform \textsc{ERM-Benchmark} as $\gamma'$ grows, i.e., as robustness becomes more important. We also see that while our fractional panel $\hatp \setminus D$ usually outperforms or roughly matches the benchmark for low choices of $\gamma$, our rounded panel $\tilde{K} \setminus D$ has larger loss due to the cost of rounding fractional panels to integral ones via \textit{Pipage}. This cost gap is greater for instances with smaller and/or tighter quotas, where there is less slack within which to make fractional changes and such adjustments cause larger relative loss.

\begin{figure}[H]
    \centering
    \includegraphics[width=0.45\linewidth]{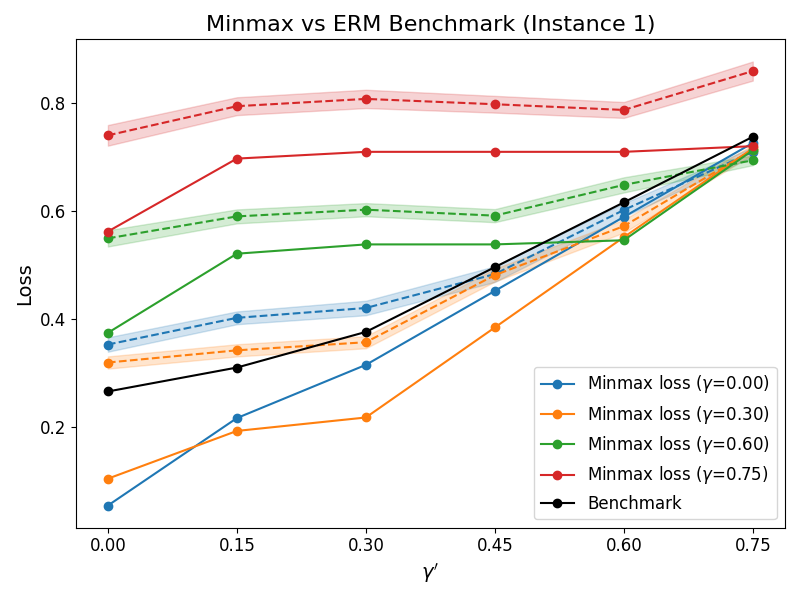}
    \caption{Here we compare $\mathcal{L}(\hatp,\dvec;\inst)$ (solid lines), $\mathcal{L}(\tilde{K},\dvec;\inst)$ (dotted lines) and $\mathcal{L}(K^{\text{ERM}},\dvec;\inst)$ (black line), over true error $\gamma'$ (x axis) and robustness level $\gamma$. For $\tilde{K}$, we show averages over 100 runs of \textsc{MinMax-Pipage}; shaded regions are standard errors.}
    \label{fig:benchmark}
\end{figure}

\textsc{III. Equality:} While our algorithms are $\alpha,\beta$-equal, we empirically show that \textsc{ERM-Benchmark} gives minimum and maximum selection probabilities of 0 and 1 respectively in all instances (see \Cref{app:selprobs}). 

\section{Future Work}
There remain several open problems on handling dropouts in the sortition process. The foremost is whether we can practicably (if not efficiently) solve the direct version of our optimization problem without the fractional relaxation, as formulated in the introduction. One could also formalize the differences between selecting \textit{alternates} versus \textit{entire robust panels}; while the latter is more convenient, alternates are in a sense fundamentally more ``powerful'', because they can be selectively deployed \textit{after} seeing the dropout set. Finally, studying the true ``predictability'' of dropout events in larger datasets may allow more realistic models of adversarial action spaces, leading to algorithms whose robustness is more appropriately targeted and possibly less costly when true prediction errors are low.

\bibliography{iclr-references}
\bibliographystyle{iclr2026_conference}

\newpage
\appendix
\section{Supplemental Materials from Sections 
\ref{section:model-new} and \ref{sec:algos-new}}
\subsection{\textsc{Projected-Subgradient}}
\label{app: Projected-Subgradient}
\setlength{\itemsep}{1.5pt}
\begin{algorithm}[h]
\caption{\textsc{Projected-Subgradient}$(\pvec^t,\delta^t,\eta)$}
\label{alg:pgd}
\begin{algorithmic}[1]
\State Initialize $g\leftarrow \mathbf{0}\in\mathbb{R}^n$
\For{each $D$ in the support of $\delta^t$}
    \State $(f^*,v^*,\dagger^*) \leftarrow \displaystyle \argmax_{(f,v)\in FV, \dagger \in \{-,+\}} \dev^{\dagger}_{f,v}(\pvec^t;D)$ 
    \If{$\dagger^*$ is $-$}
        \For{$i\in N_{f^*,v^*}$}
            \State $g_i \mathrel{+}= -\delta(D)\,\mathbb{I}(i\notin D)/u_{f^*,v^*}$
        \EndFor
    \ElsIf{$\dagger^*$ is $+$}
        \For{$i\in N_{f^*,v^*}$}
            \State $g_i \mathrel{+}= +\delta(D)\,\mathbb{I}(i\notin D)/u_{f^*,v^*}$
        \EndFor
    \EndIf
\EndFor
\State $\pvec^{t+1} \leftarrow \pvec^t - \eta\, g$
\State $\pvec^{t+1} \leftarrow \bigl(\min\{\max(p^{t+1}_i, \alpha), \beta\}\bigr)_{i=1}^n$
\State \Return $\pvec^{t+1}$
\end{algorithmic}
\end{algorithm}
\label{app:new-adv}

\subsection{Convexity--Concavity and Saddle Point} \label{app:convconc}
\begin{proposition}[Convexity in $\pvec$, concavity in $\delta$]\label{prop:convex-concave}
For fixed $\delta$, $\mathcal{L}(\cdot,\delta;\inst)$ is convex in $\pvec$; for fixed $\pvec$, $\mathcal{L}(\pvec,\cdot;\inst)$ is affine (hence concave) in $\delta$.
\end{proposition}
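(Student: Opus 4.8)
The plan is to unwind the definitions and observe that each layer of the construction preserves the relevant structural property; I would handle the $\pvec$ claim and the $\delta$ claim separately, working from the innermost quantity outward.

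First, for convexity in $\pvec$, I would fix a dropout set $D$ and argue upward. The innermost quantity $S_{f,v}(\pvec;D)=\sum_{i\in N_{f,v}} p_i\,\mathbb{I}(i\notin D)$ is a fixed nonnegative linear combination of the coordinates of $\pvec$, hence affine in $\pvec$. Each deviation $\dev^{-}_{f,v}(\pvec;D)$ and $\dev^{+}_{f,v}(\pvec;D)$ is then an affine image of $S_{f,v}$ (a rescaling by $1/u_{f,v}$ plus a constant shift), so it too is affine in $\pvec$. Since $\FV$ is finite and $\dagger$ ranges over the two-element set $\{-,+\}$, the map $h(\pvec;D)=\max_{(f,v),\dagger}\dev^{\dagger}_{f,v}(\pvec;D)$ is a pointwise maximum of \emph{finitely many} affine functions and is therefore convex in $\pvec$. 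Finally, I would invoke that expectation preserves convexity: $\mathcal{L}(\pvec,\delta;\inst)=\E_{D\sim\delta}[h(\pvec;D)]$ is a $\delta$-weighted average of the convex maps $\pvec\mapsto h(\pvec;D)$, so it is convex in $\pvec$.

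Second, for affineness in $\delta$, I would fix $\pvec$, so that each $h(\pvec;D)$ is a constant. Then $\mathcal{L}(\pvec,\delta;\inst)=\sum_{D\subseteq N}\delta(D)\,h(\pvec;D)$ is a linear functional of the weight vector $(\delta(D))_{D\subseteq N}$, and linear functionals are affine\emdash hence both concave and convex\emdash which yields the claim.

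I do not expect a genuine obstacle here. The only points that require care are making the reduction chain ``affine $\to$ max-of-affine is convex $\to$ expectation preserves convexity'' explicit, and noting that the index set of the maximum is finite so that the pointwise-max argument applies cleanly with no measurability subtleties. On the $\delta$ side, $\mathcal{L}$ is literally linear in the probability weights, so affineness (and thus the stated concavity) is immediate; no strict concavity is claimed or needed, which is all that the subsequent appeal to Sion's minimax theorem requires.
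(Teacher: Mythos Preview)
Your proposal is correct and mirrors the paper's own proof essentially step for step: affine $S_{f,v}$, finitely many affine deviations, pointwise max gives convexity of $h$, expectation preserves convexity, and linearity in the weights $\delta(D)$ gives affineness in $\delta$. The only cosmetic difference is that the paper first groups $\dev^{-}_{f,v}$ and $\dev^{+}_{f,v}$ into a single max-of-two before taking the outer max over $(f,v)$, whereas you take the max over all $(f,v,\dagger)$ at once; these are equivalent.
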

\begin{proof}
For any $D\subseteq N$, $S_{f,v}(\pvec;D)=\sum_{i\in N_{f,v}} p_i\,\mathbb{I}(i\notin D)$ is affine in $\pvec$. Therefore, $\phi_{f,v}(\pvec;D)=(1/u_{f,v})\,\max\{\ell_{f,v}-S_{f,v}(\pvec;D),\, S_{f,v}(\pvec;D)-u_{f,v}\}$ is a pointwise maximum of two affine forms, thus convex. The function $h(D;\pvec)=\max_{(f,v)\in FV} \phi_{f,v}(\pvec;D)$ is a pointwise maximum of finitely many convex functions, hence convex. Expectation preserves convexity, so $\pvec\mapsto \mathcal{L}(\pvec,\delta;\inst)=\E_{D\sim\delta}[h(D;\pvec)]$ is convex. For fixed $\pvec$, linearity of expectation implies that $\delta\mapsto \mathcal{L}(\pvec,\delta;\inst)$ is linear (affine) in the coordinates $\{\delta(D)\}_{D\subseteq N}$.
\end{proof}

\begin{theorem}[Sion's Minimax Theorem]
If $\mathfrak{P}(\alpha,\beta)$ and $\Delta(\hatd,\gamma)$ are convex and compact, and $\mathcal{L}$ is convex in $\pvec$ and concave in $\delta$, then 
\[
 \min_{\pvec\in\mathfrak{P}(\alpha,\beta)}\max_{\delta\in\Delta(\hatd,\gamma)} \mathcal{L}(\pvec,\delta;\inst) 
 \;=\; 
 \max_{\delta\in\Delta(\hatd,\gamma)}\min_{\pvec\in\mathfrak{P}(\alpha,\beta)} \mathcal{L}(\pvec,\delta;\inst).
\]
\end{theorem}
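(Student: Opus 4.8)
The plan is to recognize the statement as a direct instance of Sion's minimax theorem and to prove it by verifying that theorem's hypotheses rather than reproving the minimax equality from scratch. Sion's theorem needs two nonempty sets, at least one compact, together with a payoff that is quasi-convex and lower semicontinuous in the minimizing variable and quasi-concave and upper semicontinuous in the maximizing variable. So the work reduces to checking, for $\mathcal{L}$ on $\mathfrak{P}(\alpha,\beta)\times\Delta(\hatd,\gamma)$, that the two domains are convex and compact and that $\mathcal{L}$ has the required convex--concave and (semi)continuity structure; the interchange of $\min$ and $\max$ then follows immediately.

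First I would dispatch the domains. The minimizer's set $\mathfrak{P}(\alpha,\beta)=[\alpha,\beta]^n$ is a box in $\mathbb{R}^n$, hence convex and compact. For the maximizer, the key observation is that $N$ is finite, so the ambient simplex $\Delta(2^N)$ sits in $\mathbb{R}^{2^n}$ and is convex and compact; the feasible set $\Delta(\hatd,\gamma)$ is its intersection with finitely many linear constraints (normalization, the expected-size equality, and the $2n$ marginal bounds), hence a closed bounded polytope, so again convex and compact. I would also note it is nonempty\emdash the product distribution with marginals $\hatd$ is feasible\emdash so the inner maximum is attained and the statement is not vacuous.

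Next I would establish the convex--concave and continuity conditions. Convexity in $\pvec$ and affinity (hence concavity) in $\delta$ are exactly \Cref{prop:convex-concave}, which already supplies the quasi-convexity and quasi-concavity Sion requires. For semicontinuity I would use finiteness twice: $h(\pvec;D)$ is a pointwise maximum over the finite index set $\FV\times\{-,+\}$ of functions affine in $\pvec$, and $\mathcal{L}(\pvec,\delta;\inst)=\sum_{D\subseteq N}\delta(D)\,h(\pvec;D)$ is a finite sum because $2^N$ is finite; together these make $\mathcal{L}$ jointly continuous, so lower semicontinuity in $\pvec$ and upper semicontinuity in $\delta$ both hold. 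With every hypothesis in hand, invoking Sion's theorem closes the argument.

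The hard part here is not a computation but matching the regularity conditions correctly: the main thing to be careful about is that convexity and affinity give the quasi-convex/quasi-concave requirements for free, and that the finiteness of both $\FV$ and $2^N$ is precisely what upgrades the pointwise-max-of-affine structure to full continuity, so Sion's semicontinuity side conditions are never in question. As a sanity check I would observe that, because $\mathcal{L}$ is affine in $\delta$, one could also bypass Sion and derive the equality from LP duality on the inner maximization together with compactness of $\mathfrak{P}(\alpha,\beta)$; but the direct appeal to Sion is the cleanest route.
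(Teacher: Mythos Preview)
Your proposal is correct and matches the paper's own treatment: the paper does not reprove Sion's theorem but states it as a known result and then, in the paragraph immediately following, verifies exactly the hypotheses you check\emdash that $\mathfrak{P}(\alpha,\beta)$ is a product of closed intervals, that $\Delta(\hatd,\gamma)$ is the intersection of the simplex with linear constraints (hence both convex and compact), and that the convex--concave structure comes from \Cref{prop:convex-concave}. You are slightly more thorough in explicitly addressing nonemptiness and the semicontinuity side conditions, but the approach is the same.
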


To verify that Sion's minimax theorem applies to our case, it remains to verify the convexity/compactness of the action spaces for the minimization and maximization players. $\mathfrak{P}(\alpha,\beta)$ is a product of closed intervals; $\Delta(\hatd,\gamma)$ is the intersection of the probability simplex with linear marginal constraints and (optionally) a linear expected-size equality; both are convex and compact. Combining with the previous proposition and applying Sion's minimax theorem, we see that we can interchange the min and the max in our formulation. As the $\max\min$ value of the game is equal to the $\min\max$ value of the game, henceforth we will refer to simply the \emph{value} of the game when referring to either quantity.

\subsection{Polynomial-Time Best Response via Ellipsoid}
For fixed $\pvec$, the best response problem for the  adversary is to solve the following linear program:
\begin{align}
\max_{\delta\ge 0}\; & \sum_{D\subseteq N} \delta(D)\, h(D;\pvec)\\
\text{s.t. } & \sum_{D\subseteq N} \delta(D)=1,\qquad \sum_{D\ni i} \delta(D) \in [\ell_i^{\mathrm{marg}}, u_i^{\mathrm{marg}}] \; (i=1,\dots,n),\\
& \sum_{D\subseteq N} |D|\, \delta(D) = c,
\end{align}
with $\ell_i^{\mathrm{marg}}=\max(0,\tilde{d}_i-\gamma)$, $u_i^{\mathrm{marg}}=\min(1,\tilde{d}_i+\gamma)$, and $c=\sum_i \tilde{d}_i$.
We now take the dual. The dual linear program has the following dual variables, corresponding to the constraints in the primal: $\lambda\in\R$ (normalization), $\tau\in\R$ (expected size), $a_i\ge 0$ (upper marginals), $b_i\ge 0$ (lower marginals). The dual is
\begin{align}
\min_{\lambda,\tau,\mathbf{a},\mathbf{b}\ge 0}\; & \lambda + c\,\tau + \sum_i u_i^{\mathrm{marg}} a_i - \sum_i \ell_i^{\mathrm{marg}} b_i\\
\text{s.t. } & \lambda + \tau |D| + \sum_{i\in D} (a_i - b_i) \ge h(D;\pvec) \quad \forall D\subseteq N.
\end{align}
We show that this best-response problem can be solved in polynomial time via the Ellipsoid method applied to the dual, using a polynomial-time separation oracle.

\paragraph{Dual separation oracle in $O(|\FV|\,n)$ time.} Define $w_i := a_i - b_i + \tau$. The family of dual constraints is equivalent to
\begin{equation}
\label{eq:dual-sep}
\max_{D\subseteq N} \Big\{ h(D;\pvec) - \sum_{i\in D} w_i \Big\} \;\le\; \lambda.
\end{equation}
To find a violated constraint (if one exists) it suffices to find $D$ maximizing the left hand side.
We now show the left-hand maximum can be evaluated in $O(|\FV|\,n)$ time.

\paragraph{Explicit decomposition and constants.} We use the fact that the function $h(D; \pvec)$ is the maximum of $2|\FV|$ functions, each of which are linear (modular) in $D$, in which each term in the maximum is defined by feature value pairs $(f,v)$ and a sign in $\{+,-\}$. The $D$ that maximizes this maximum must be the maximizer of a single one of the constituent modular functions. Optimizing a modular function is simple and reduces to computing a threshold rule. Our separation oracle iterates over each of the constituent modular functions, finds the maximizer for each one, and then amongst this set chooses the one that maximizes the overall objective. We now make this explicit. Recall:
\[
S_{f,v}(\pvec;D) = \sum_{i\in N_{f,v}} p_i\, \mathbb{I}(i\notin D) = \underbrace{\sum_{i\in N_{f,v}} p_i}_{=:M_{f,v}}\; -\; \sum_{i\in N_{f,v}\cap D} p_i.
\]
Then the two sides of the per-constraint deviation in \eqref{eq:hD} decompose as the maximum over the following two linear/modular functions in $D$:
\begin{align*}
\frac{1}{u_{f,v}}(\ell_{f,v} - S_{f,v}) &= \underbrace{\frac{\ell_{f,v} - M_{f,v}}{u_{f,v}}}_{=:\,C^{-}_{f,v}}\; +\; \sum_{i\in N_{f,v}\cap D} \frac{p_i}{u_{f,v}},\\
\frac{1}{u_{f,v}}(S_{f,v} - u_{f,v}) &= \underbrace{\frac{M_{f,v} - u_{f,v}}{u_{f,v}}}_{=:\,C^{+}_{f,v}}\; -\; \sum_{i\in N_{f,v}\cap D} \frac{p_i}{u_{f,v}}.
\end{align*}
Thus the final objective can be written as the maximimum over $2\cdot |\FV|$ such linear/modular functions in $D$: $h(D;\pvec)=\max_{(f,v)\in\FV}\{H^{-}_{f,v}(D),\,H^{+}_{f,v}(D)\}$ where
\[
H^{-}_{f,v}(D)= C^{-}_{f,v} + \sum_{i\in N_{f,v}\cap D} \tfrac{p_i}{u_{f,v}},\qquad
H^{+}_{f,v}(D)= C^{+}_{f,v} - \sum_{i\in N_{f,v}\cap D} \tfrac{p_i}{u_{f,v}}.
\]
\paragraph{Per-element gains and thresholding.} Each of these modular functions has a simple optimizer characterized by a threshold. For any fixed term $t\in\{(f,v,\text{-}),\,(f,v,\text{+})\}$, the objective in \eqref{eq:dual-sep} specialized to $t$ equals
\[
H_t(D) - \sum_{i\in D} w_i 
           = C_t + \sum_{i\in D} g_i^t,\quad \text{with}\;
\begin{cases}
g_i^{(f,v,\text{-})} = \mathbb{I}(i\in N_{f,v})\, \tfrac{p_i}{u_{f,v}} - w_i,\\
g_i^{(f,v,\text{+})} = -\mathbb{I}(i\in N_{f,v})\, \tfrac{p_i}{u_{f,v}} - w_i.
\end{cases}
\]

Hence the inner maximum is attained by the \emph{threshold set}
\[
D_t^* := \{ i\in N : g_i^t > 0 \},\qquad \text{value }= C_t + \sum_{i: g_i^t>0} g_i^t.
\]
Evaluating all $2|\FV|$ candidates and selecting the best one evaluates the left-hand side of \eqref{eq:dual-sep} in $O(|\FV|\,n)$ time, furnishing a separation oracle.
\paragraph{Bounding the dual region.} Standard LP bit-complexity bounds imply there exists an optimal dual solution with each coordinate bounded in magnitude by $2^{\mathrm{poly}(L)}$, where $L$ is the input bit-length (of $\ell,u,\tilde{d},\gamma,\pvec$). Intersecting the dual with this axis-parallel box preserves optimality and provides the initial ellipsoid for Ellipsoid.

\begin{theorem}[Dual optimization via Ellipsoid]
\label{thm:ellipsoid-dual}
Given $\pvec$ and the separation oracle above, the Ellipsoid method computes an $\varepsilon$-optimal feasible dual solution in time polynomial in $(n,|\FV|,L,\log(1/\varepsilon))$. Choosing $\varepsilon=2^{-\Theta(L)}$ yields an exact rational optimum.
\end{theorem}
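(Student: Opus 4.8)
The plan is to invoke the Grötschel--Lovász--Schrijver framework, which guarantees that a linear program with polynomially many variables but exponentially many constraints can be solved in polynomial time whenever a polynomial-time strong separation oracle is available. The dual here has only $2n+2$ variables $(\lambda,\tau,\mathbf{a},\mathbf{b})$, so the dimension is $O(n)$, and the separation oracle developed above evaluates $\max_{D\subseteq N}\{h(D;\pvec)-\sum_{i\in D}w_i\}$ in $O(|\FV|\,n)$ time. First I would reduce \emph{optimization} to \emph{feasibility} by binary search on the objective value $\theta$: augment the constraint system with the single explicit linear inequality $\lambda + c\,\tau + \sum_i M_i a_i - \sum_i m_i b_i \le \theta$ and test nonemptiness of the resulting polyhedron. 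The combined oracle first checks this auxiliary inequality together with the nonnegativity constraints $a_i,b_i\ge 0$ directly (finitely many, $O(n)$ work), and only then invokes the $O(|\FV|\,n)$ oracle for the exponential family; hence it remains a polynomial-time strong separation oracle. A logarithmic number of bisection steps on $\theta$ (whose endpoints are $2^{\mathrm{poly}(L)}$-bounded rationals) suffices to locate the optimal value to any target precision.

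Second, I would verify the two geometric preconditions of the Ellipsoid method. For the outer radius, the ``Bounding the dual region'' paragraph already supplies an axis-parallel box of radius $R=2^{\mathrm{poly}(L)}$ that preserves optimality, which serves as the initial ellipsoid. For the inner radius, standard LP bit-complexity bounds give that whenever the $\theta$-sublevel feasible set is nonempty it contains a ball of radius $r=2^{-\mathrm{poly}(L)}$. Crucially, I must confirm the oracle returns a genuine \emph{separating hyperplane}: when it finds a maximizing $D^\star$ with $h(D^\star;\pvec)-\sum_{i\in D^\star}w_i>\lambda$, the corresponding violated dual constraint $\lambda+\tau|D^\star|+\sum_{i\in D^\star}(a_i-b_i)\ge h(D^\star;\pvec)$ is a single linear inequality that is satisfied by every feasible point but violated by the query, so it is exactly the cut the Ellipsoid method needs. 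With dimension $O(n)$ and $\log(R/r)=\mathrm{poly}(L)$, the Ellipsoid method performs $O\!\big(n^2\,(\mathrm{poly}(L)+\log(1/\varepsilon))\big)$ iterations, each calling the oracle once and doing polynomial rational arithmetic with $\mathrm{poly}(L)$-bit numbers; multiplying the per-iteration cost $O(|\FV|\,n)$ through yields the claimed runtime polynomial in $(n,|\FV|,L,\log(1/\varepsilon))$.

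Finally, for the exact rational optimum, I would argue that once $\varepsilon=2^{-\Theta(L)}$ is chosen small enough, any $\varepsilon$-optimal feasible point lies within $2^{-\mathrm{poly}(L)}$ of an optimal vertex of the dual polyhedron, and every vertex has rational coordinates of bit-length $\mathrm{poly}(L)$ (Cramer's rule on the $O(n)\times O(n)$ subsystem defining it). Rounding the approximate solution to the nearest such bounded-denominator rational via simultaneous Diophantine approximation (continued fractions) then recovers the exact optimal vertex, which can be certified by re-solving the active subsystem. The main obstacle I anticipate is the classic degeneracy/full-dimensionality subtlety: the dual feasible set (especially after fixing the objective at its optimum) may be lower-dimensional, so the inner-ball hypothesis can fail literally. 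The clean fix is to work with the \emph{weak} optimization/separation versions of the GLS theorem, or equivalently to perturb the right-hand sides by a $2^{-\mathrm{poly}(L)}$ amount to make the region full-dimensional while keeping its optimum within $\varepsilon$ of the true one; the bit-complexity bounds guarantee the perturbation is small enough to be undone by the final rounding step.
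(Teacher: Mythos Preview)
Your proposal is correct and follows essentially the same route as the paper: both arguments invoke the Ellipsoid/GLS framework using the $O(|\FV|\,n)$ separation oracle, the polynomial-bit bounding box for initialization, and the standard bit-complexity bound to pass from an $\varepsilon=2^{-\Theta(L)}$ approximation to an exact rational optimum. The paper's proof is much terser---it simply asserts these ingredients without spelling out the binary-search reduction, inner-radius/degeneracy handling, or Diophantine rounding that you carefully work through---so your version is a strict elaboration rather than a different approach.
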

\begin{proof}
The dual feasible set is a convex polyhedron contained in a box of polynomial bit-size. The separation oracle for constraints \eqref{eq:dual-sep} runs in $O(|\FV|\,n)$ time by the decomposition and thresholding argument above. The Ellipsoid method with polynomial-size initialization and polynomial-time separation yields an $\varepsilon$-optimal point in a number of iterations polynomial in the dimension and $\log(1/\varepsilon)$. Since coefficients have bit-length $\le L$, taking $\varepsilon=2^{-\Theta(L)}$ recovers an exact optimum.
\end{proof}

\paragraph{Primal recovery from violated constraints.} During the Ellipsoid run, whenever a constraint is violated the oracle returns a subset $D\subseteq N$. Let $\mathcal{C}$ be the collection of all such subsets gathered by Ellipsoid (its size is polynomial).

\begin{theorem}[Primal recovery]
\label{thm:primal-recovery}
Let $\mathcal{C}$ be as above. Consider the primal LP restricted to variables $\{\delta(D): D\in \mathcal{C}\}$ and the original constraints. An optimal solution to this restricted LP, extended by zeros outside $\mathcal{C}$, is optimal for the full primal.
\end{theorem}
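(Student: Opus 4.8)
The plan is to invoke the standard Ellipsoid-with-separation-oracle argument (in the style of Gr\"otschel--Lov\'asz--Schrijver) that justifies replacing an LP with exponentially many constraints by the polynomially many constraints the algorithm actually queries. Write $\mathrm{opt}(\mathrm{P})$ and $\mathrm{opt}(\mathrm{D})$ for the optimal values of the full primal and full dual, and $\mathrm{opt}(\mathrm{P}_\mathcal{C})$, $\mathrm{opt}(\mathrm{D}_\mathcal{C})$ for the values of their restrictions, where $\mathrm{P}_\mathcal{C}$ keeps only the variables $\{\delta(D):D\in\mathcal{C}\}$ and $\mathrm{D}_\mathcal{C}$ keeps only the constraints indexed by $D\in\mathcal{C}$. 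The first observation is that, because primal variables correspond to dual constraints under LP duality, deleting the primal variables outside $\mathcal{C}$ is exactly deleting the dual constraints outside $\mathcal{C}$; hence $\mathrm{P}_\mathcal{C}$ is literally the LP dual of $\mathrm{D}_\mathcal{C}$.

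The crux is the identity $\mathrm{opt}(\mathrm{D}_\mathcal{C})=\mathrm{opt}(\mathrm{D})$. One direction is free: $\mathrm{D}_\mathcal{C}$ is a relaxation of the full dual (a subset of its constraints), so its minimum can only drop, giving $\mathrm{opt}(\mathrm{D}_\mathcal{C})\le \mathrm{opt}(\mathrm{D})$. For the reverse I would appeal to the operational guarantee of the Ellipsoid run: its entire trajectory, and in particular the certificate that the returned point is optimal, depends only on the violated constraints handed back by the separation oracle, every one of which is placed into $\mathcal{C}$ by construction. Replaying the identical sequence of oracle responses on $\mathrm{D}_\mathcal{C}$ produces the same iterates and the same certified optimum, so the constraints of $\mathcal{C}$ alone already force the objective to be at least $\mathrm{opt}(\mathrm{D})$; this yields $\mathrm{opt}(\mathrm{D}_\mathcal{C})\ge\mathrm{opt}(\mathrm{D})$ and hence equality. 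By Theorem~\ref{thm:ellipsoid-dual} this common value is finite and attained exactly.

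Now I chain equalities. The full primal is feasible and bounded: the product distribution with marginals $\hatd$ has each marginal $\tilde d_i\in[\ell_i^{\mathrm{marg}},u_i^{\mathrm{marg}}]$ and expected size $\sum_i \tilde d_i=c$, and the objective is at most $\max_{D} h(D;\pvec)<\infty$; so strong duality gives $\mathrm{opt}(\mathrm{P})=\mathrm{opt}(\mathrm{D})$. Since $\mathrm{D}_\mathcal{C}$ is feasible (the full-dual optimum satisfies its constraints) with finite optimal value, LP strong duality applied to the finite pair $(\mathrm{P}_\mathcal{C},\mathrm{D}_\mathcal{C})$ shows $\mathrm{P}_\mathcal{C}$ is feasible and $\mathrm{opt}(\mathrm{P}_\mathcal{C})=\mathrm{opt}(\mathrm{D}_\mathcal{C})$ --- this is also how we learn that $\mathcal{C}$ is rich enough to support a feasible primal point. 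Combining,
\[
\mathrm{opt}(\mathrm{P}_\mathcal{C})=\mathrm{opt}(\mathrm{D}_\mathcal{C})=\mathrm{opt}(\mathrm{D})=\mathrm{opt}(\mathrm{P}).
\]
To finish, take an optimal $\delta^\star$ of $\mathrm{P}_\mathcal{C}$ and extend it by zeros outside $\mathcal{C}$. Each full-primal constraint --- normalization $\sum_D\delta(D)=1$, expected size $\sum_D |D|\,\delta(D)=c$, and each marginal box $\sum_{D\ni i}\delta(D)\in[\ell_i^{\mathrm{marg}},u_i^{\mathrm{marg}}]$ --- is a linear sum to which the zero coordinates contribute nothing, so each reduces to its value under $\delta^\star$ in $\mathrm{P}_\mathcal{C}$; nonnegativity is trivially preserved. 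The extension is thus feasible for the full primal and attains objective $\mathrm{opt}(\mathrm{P}_\mathcal{C})=\mathrm{opt}(\mathrm{P})$, so it is optimal.

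The step I expect to be the main obstacle is the reverse inequality $\mathrm{opt}(\mathrm{D}_\mathcal{C})\ge\mathrm{opt}(\mathrm{D})$ in the crux. The delicacy is that an optimum of the relaxed LP $\mathrm{D}_\mathcal{C}$ need not be feasible for the full dual, so a bare feasibility comparison is not enough; the argument has to lean on the fact that the Ellipsoid method's output is correct relative to precisely the set of constraints it has inspected --- the formal content behind ``the same run replays on $\mathrm{D}_\mathcal{C}$.'' Everything else is bookkeeping with LP duality and the trivial check that padding a feasible restricted solution with zeros stays feasible.
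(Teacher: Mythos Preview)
Your argument is correct and is the standard Gr\"otschel--Lov\'asz--Schrijver primal-recovery argument, carried out carefully. The paper's proof has the same underlying content but packages it differently: rather than establishing the chain $\mathrm{opt}(\mathrm{P}_\mathcal{C})=\mathrm{opt}(\mathrm{D}_\mathcal{C})=\mathrm{opt}(\mathrm{D})=\mathrm{opt}(\mathrm{P})$ via strong duality on both pairs, it argues directly by contradiction using reduced costs. It takes the Ellipsoid output $(\lambda,\tau,\mathbf{a},\mathbf{b})$ as a certified full-dual optimum (from Theorem~\ref{thm:ellipsoid-dual}), notes that weak duality bounds the restricted primal optimum by this dual value, and then observes that any strictly better full-primal solution would have to place positive mass on some $D$ with strictly positive reduced cost at the dual optimum---impossible since the final dual is feasible for \emph{all} $D$ by the separation oracle's termination criterion.

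The two framings are interchangeable; yours is more explicit about where each inequality comes from (and in particular isolates the ``replay'' step $\mathrm{opt}(\mathrm{D}_\mathcal{C})\ge\mathrm{opt}(\mathrm{D})$ that you correctly flag as the nontrivial point), while the paper's reduced-cost phrasing is terser and leans more heavily on Theorem~\ref{thm:ellipsoid-dual} having already certified full-dual optimality. Either way the same fact is doing the work: the Ellipsoid output is feasible for every dual constraint, not just those in~$\mathcal{C}$.
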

\begin{proof}
Let $(\lambda,\tau,\mathbf{a},\mathbf{b})$ be the Ellipsoid-output dual, which is feasible for all constraints indexed by $D\in\mathcal{C}$ and satisfies \eqref{eq:dual-sep} for all $D\subseteq N$. Let $\delta^{\mathcal{C}}$ be an optimal solution of the restricted primal. By weak duality, its value is at most the dual value; by construction, all constraints corresponding to $\mathcal{C}$ are enforced. Suppose there existed a full primal solution $\tilde\delta$ with strictly larger objective. Then some positive mass must lie on a subset $D\notin\mathcal{C}$ with strictly positive reduced cost at optimality, contradicting that the final dual satisfies \eqref{eq:dual-sep}. Therefore $\delta^{\mathcal{C}}$ (extended by zeros) is optimal for the full primal.
\end{proof}

\begin{corollary}[Polynomial-time best response]
\label{cor:poly-br}
The adversary's best response can be computed in time polynomial in $(n,|\FV|,L)$ by running Ellipsoid on the dual with the oracle above and then solving the primal restricted to the polynomial set of columns collected during the run; a final oracle call certifies optimality.
\end{corollary}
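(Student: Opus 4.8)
The plan is to assemble the three ingredients already established in the surrounding development, since the corollary is essentially a runtime bookkeeping statement combining Theorems~\ref{thm:ellipsoid-dual} and~\ref{thm:primal-recovery}. First I would invoke the separation oracle's $O(|\FV|\,n)$ running time (from the modular decomposition and thresholding argument) together with Theorem~\ref{thm:ellipsoid-dual} to conclude that Ellipsoid, initialized on the polynomially-bit-bounded dual box, produces an exact rational dual optimum $(\lambda,\tau,\mathbf{a},\mathbf{b})$ in a number of iterations polynomial in $(n,|\FV|,L)$. The observation I would make explicit here is that each Ellipsoid iteration invokes the oracle exactly once and the oracle returns at most one violated subset $D$; hence the collection $\mathcal{C}$ accumulated over the entire run satisfies $|\mathcal{C}| \le (\text{iteration count}) = \mathrm{poly}(n,|\FV|,L)$. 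This is what prevents the exponential blowup: we never touch more than polynomially many of the $2^n$ primal columns.

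Next I would observe that the restricted primal LP over the variables $\{\delta(D): D\in\mathcal{C}\}$, subject to the normalization, expected-size, and $O(n)$ marginal constraints, is of polynomial size in its dimension and its coefficients' bit-length, and is therefore solvable exactly in polynomial time by any standard LP method. Finally I would apply Theorem~\ref{thm:primal-recovery} to certify that an optimal solution of this restricted LP, extended by zeros off $\mathcal{C}$, is in fact optimal for the full $2^n$-variable primal. The single terminal oracle call then re-verifies \eqref{eq:dual-sep} for the recovered pair, furnishing an explicit optimality certificate. Summing the polynomial Ellipsoid runtime, the polynomial-size restricted solve, and the one final oracle evaluation gives the claimed $\mathrm{poly}(n,|\FV|,L)$ bound.

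The part I would want to argue most carefully is the optimality transfer inherited from Theorem~\ref{thm:primal-recovery}, namely why the restricted primal attains the \emph{full} primal's optimum despite our never enumerating the exponentially many columns. The crux is that the Ellipsoid-output dual is feasible for the entire constraint family, i.e.\ \eqref{eq:dual-sep} holds for all $D\subseteq N$ and not merely for $D\in\mathcal{C}$; this lets me form the weak-duality sandwich $(\text{restricted primal value}) \le (\text{full primal value}) \le (\text{dual value})$, while complementary slackness against the Ellipsoid dual forces the restricted primal to meet the dual value, collapsing the chain to equality. The only genuine obstacle is thus not any deep mathematics but being precise that the dual has been driven to global feasibility by the separation oracle, which is exactly the hypothesis supplied by Theorem~\ref{thm:primal-recovery}; everything else is routine polynomial-time accounting.
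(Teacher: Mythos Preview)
Your proposal is correct and matches the paper's approach: the corollary is stated there without a separate proof precisely because it is the immediate combination of Theorem~\ref{thm:ellipsoid-dual} (Ellipsoid on the dual via the $O(|\FV|\,n)$ oracle) with Theorem~\ref{thm:primal-recovery} (optimality of the restricted primal), together with the observation that $|\mathcal{C}|$ is bounded by the number of Ellipsoid iterations. Your write-up spells out the polynomial-time accounting and the weak-duality sandwich that the paper leaves implicit, but the argument is the same.
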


\paragraph{Variant: inequality on expected size.} If the expected-size equality is replaced by $\sum_D |D|\,\delta(D) \le c$, then $\tau\ge 0$ in the dual and the separation becomes $\max_{D}\{h(D;\pvec) - \sum_{i\in D}(a_i-b_i) - \tau |D|\} \le \lambda$. The same oracle and analysis apply and the statements above remain valid.

\paragraph{Subgradient bounds.} For use in no-regret analysis, observe that each coordinate of a subgradient of $\mathcal{L}(\cdot,\delta;\inst)$ at $\pvec$ has magnitude at most $\max_{(f,v)} 1/u_{f,v}\le 1$ and arises from at most one active side at each $D$ in the support of $\delta$. Thus $\|g\|_2\le \sqrt{n}$ and standard $O(\sqrt{n/T})$ rates apply.

\subsection{Proof of Theorem~\ref{thm:no-regret}}
\label{app:proof-no-regret}

We restate the claim. Let $\{\pvec^t\}_{t=0}^{T-1}$ be generated by Algorithm~\ref{alg:minmax-cc} and define $\hat{\pvec}:=\frac{1}{T}\sum_{t=1}^{T}\pvec^t$. Then
\begin{equation}
\label{eq:thm-no-regret}
 \max_{\delta\in\Delta(\hatd,\gamma)}\, \mathcal{L}(\hat{\pvec},\delta;\inst)
 \;\le\; \min_{\pvec\in\mathfrak{P}(\alpha,\beta)}\max_{\delta\in\Delta(\hatd,\gamma)}\, \mathcal{L}(\pvec,\delta;\inst)
 \; + \; O\!\left(\sqrt{\tfrac{n}{T}}\right).
\end{equation}

\paragraph{Setup.} Recall that  $\delta^t\in\arg\max_{\delta\in\Delta(\hatd,\gamma)}\mathcal{L}(\pvec^t,\delta;\inst)$ for each round $t$ (i.e., a best response to $\pvec^t$). Define the sequence of convex functions on $\mathfrak{P}(\alpha,\beta)$
\[ f_t(\pvec) := \mathcal{L}(\pvec,\delta^t;\inst), \qquad t=0,\dots,T-1. \]
By Proposition~\ref{prop:convex-concave}, each $f_t$ is convex, and a subgradient $g_t\in\partial f_t(\pvec^t)$ is produced by Algorithm~\ref{alg:pgd}.

\paragraph{Projected subgradient regret.} Let $\Pi_{\mathfrak{P}}$ be Euclidean projection onto $\mathfrak{P}(\alpha,\beta)$. The projected subgradient update is $\pvec^{t+1}=\Pi_{\mathfrak{P}}(\pvec^{t}-\eta g_t)$. For any $\pvec\in\mathfrak{P}(\alpha,\beta)$, the standard regret bound for projected subgradient descent (e.g., \cite{zinkevich2003online}) yields
\begin{equation}
\label{eq:psgd-regret}
 \sum_{t=0}^{T-1} f_t(\pvec^t) - \sum_{t=0}^{T-1} f_t(\pvec)
 \;\le\; \frac{\|\pvec-\pvec^0\|_2^2}{2\eta} 
 \, + \, \frac{\eta}{2} \sum_{t=0}^{T-1} \|g_t\|_2^2.
\end{equation}
Since $\mathfrak{P}(\alpha,\beta)=[\alpha,\beta]^n$, its Euclidean diameter is at most $D:=\sqrt{n}\,(\beta-\alpha)\le \sqrt{n}$, whence $\|\pvec-\pvec^0\|_2\le D$ for all $\pvec$. By the subgradient bound above, $\|g_t\|_2\le G$ with $G\le \sqrt{n}$.

Choosing the optimal stepsize $\eta = D/(G\sqrt{T})$ in~\eqref{eq:psgd-regret} and dividing by $T$ yields
\begin{equation}
\label{eq:avg-regret}
 \frac{1}{T}\sum_{t=0}^{T-1} f_t(\pvec^t) - \frac{1}{T}\sum_{t=0}^{T-1} f_t(\pvec)
 \;\le\; \frac{DG}{\sqrt{T}} \,=\, O\!\left(\sqrt{\tfrac{n}{T}}\right).
\end{equation}

\paragraph{From regret to saddle-point suboptimality.} Let $\pvec^\star\in\arg\min_{\pvec\in\mathfrak{P}(\alpha,\beta)} \max_{\delta\in\Delta(\hatd,\gamma)} \mathcal{L}(\pvec,\delta;\inst)$ and denote the game value by $v^*$ (recall the $\max\min$ value of the game is equal to the $\min\max$ value of the game by Sion's minimax theorem). Using $\pvec=\pvec^\star$ in~\eqref{eq:avg-regret} gives
\begin{equation}
\label{eq:step1}
 \frac{1}{T}\sum_{t=0}^{T-1} \mathcal{L}(\pvec^t,\delta^t;\inst)
 \;\le\; \frac{1}{T}\sum_{t=0}^{T-1} \mathcal{L}(\pvec^\star,\delta^t;\inst) 
 \; + \; O\!\left(\sqrt{\tfrac{n}{T}}\right).
\end{equation}
By definition of best response, $\mathcal{L}(\pvec^t,\delta^t;\inst)=\max_{\delta}\mathcal{L}(\pvec^t,\delta;\inst)$. By convexity of $\pvec\mapsto \mathcal{L}(\pvec,\delta;\inst)$ and Jensen's inequality,
\begin{equation}
\label{eq:step2}
 \max_{\delta}\, \mathcal{L}(\hat{\pvec},\delta;\inst)
 \;=\; \max_{\delta}\, \mathcal{L}\!\left(\frac{1}{T}\sum_{t=0}^{T-1}\pvec^t,\delta;\inst\right)
 \;\le\; \frac{1}{T}\sum_{t=0}^{T-1} \max_{\delta}\, \mathcal{L}(\pvec^t,\delta;\inst)
 \;=\; \frac{1}{T}\sum_{t=0}^{T-1} \mathcal{L}(\pvec^t,\delta^t;\inst).
\end{equation}
Combining~\eqref{eq:step1} and~\eqref{eq:step2},
\begin{equation}
\label{eq:step3}
 \max_{\delta}\, \mathcal{L}(\hat{\pvec},\delta;\inst)
 \;\le\; \frac{1}{T}\sum_{t=0}^{T-1} \mathcal{L}(\pvec^\star,\delta^t;\inst) 
 \; + \; O\!\left(\sqrt{\tfrac{n}{T}}\right)
 \;\le\; \max_{\delta}\, \mathcal{L}(\pvec^\star,\delta;\inst) 
 \; + \; O\!\left(\sqrt{\tfrac{n}{T}}\right) 
 \;=\; v^* + O\!\left(\sqrt{\tfrac{n}{T}}\right).
\end{equation}
This is exactly the claim in Theorem~\ref{thm:no-regret}.

\section{Supplemental Materials from Section \ref{section:formal-guarantees-new}}
\subsection{Dependent Rounding: Negative Association (NA) and Concentration}
\label{app:rounding-na}

We gather concentration tools used in the representation guarantees.

\begin{lemma}[Negative association, marginal preservation of Pipage rounding]
\label{lem:na-pipage}
Let $\tilde K\subseteq N$ be the random panel produced by Pipage rounding applied to $\pvec\in[0,1]^n$. Then the indicator variables $X_i:=\mathbb{I}(i\in \tilde K)$ are \emph{negatively associated} (NA). In particular, for any two disjoint index sets $A,B\subseteq N$ and coordinatewise nondecreasing functions $f,g$, $\mathbb{E}[f((X_i)_{i\in A})\,g((X_i)_{i\in B})] \le \mathbb{E}[f((X_i)_{i\in A})]\,\mathbb{E}[g((X_i)_{i\in B})]$. Moreover, $\mathbb{E}[X_i]=p_i$ for all $i$.
\end{lemma}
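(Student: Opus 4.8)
The plan is to verify the two claimed properties — marginal preservation and negative association — directly from the mechanics of a single Pipage step, since both are standard structural consequences of dependent rounding \citep{gandhi2006dependent}. Recall that Pipage rounding maintains a running fractional state $\pvec^{(0)}=\pvec, \pvec^{(1)}, \dots, \pvec^{(m)}$, where $\pvec^{(m)}$ is the integral indicator vector of $\tilde K$. Each step selects two currently-fractional coordinates $i,j$ and applies a randomized, \emph{mean-preserving} and \emph{sum-preserving} update: the pair $(p_i,p_j)$ is moved along the line $p_i+p_j=\text{const}$ to one of the two nearest points at which a coordinate hits $\{0,1\}$, with the two outcome probabilities chosen so that $\mathbb{E}[p_i^{(t+1)}\mid \pvec^{(t)}]=p_i^{(t)}$ (and likewise for $j$); all other coordinates are left unchanged.

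For marginal preservation, I would argue that $(\pvec^{(t)})_t$ is, coordinatewise, a martingale: by construction each step satisfies $\mathbb{E}[\pvec^{(t+1)}\mid \pvec^{(t)}]=\pvec^{(t)}$, and an untouched coordinate trivially has the same conditional expectation. Taking total expectations over the sequence of steps gives $\mathbb{E}[X_i]=\mathbb{E}[p_i^{(m)}]=p_i^{(0)}=p_i$, which is the marginal claim.

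For negative association, the core is a two-point lemma: if $Y+Z=c$ almost surely, then $(Y,Z)$ is NA, because for nondecreasing $f,g$ the variable $g(Z)=g(c-Y)$ is a nonincreasing function of $Y$, so $\mathrm{Cov}(f(Y),g(c-Y))\le 0$ by Chebyshev's sum (single-variable FKG) inequality. I would then induct on the number of rounding steps, conditioning on the schedule of chosen pairs. The inductive hypothesis is that $\pvec^{(t)}$ is NA; at step $t+1$ the two active coordinates are replaced, conditionally on the entire current state, by a pair with fixed conditional sum, hence a conditionally NA pair by the two-point lemma, while the remaining coordinates are held fixed. Invoking the standard NA closure properties — that merging mutually independent NA families preserves NA, and that applying monotone functions (in the same direction) to disjoint blocks of indices preserves NA — lets me conclude that the updated state is again NA, and integrating out the conditioning closes the induction. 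Since $\pvec^{(m)}=(X_i)_i$, the terminal collection is NA, which yields the stated inequality for coordinatewise nondecreasing $f,g$ on disjoint $A,B$.

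The main obstacle is the inductive preservation of NA under a \emph{data-dependent} local randomization: the amounts moved at step $t+1$ depend on the current (random) values $p_i^{(t)},p_j^{(t)}$, which are themselves correlated with all other coordinates, so the conditioning and the two-point argument must be nested carefully and then integrated out. This is precisely the delicate step established by \citet{gandhi2006dependent} (building on the NA machinery of Dubhashi--Ranjan), and I would either invoke their result directly or reproduce their conditional-NA argument; the marginal-preservation and two-point pieces are routine by comparison.
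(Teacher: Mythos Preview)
Your sketch is correct and in fact goes well beyond what the paper does: the paper's own ``proof'' of this lemma is a one-line citation to \citet{gandhi2006dependent} and nothing more. You, by contrast, outline the actual mechanics --- the coordinatewise martingale for marginal preservation, the two-point sum-fixed NA lemma, and the induction over rounding steps --- and you honestly flag the genuinely delicate point (that the per-step randomization is data-dependent, so the conditional-NA argument must be nested and integrated out carefully). Since this lemma is a known structural fact about dependent rounding, either approach is acceptable here; yours simply unpacks what the citation is hiding, and your closing remark that one could ``invoke their result directly'' is exactly what the paper does.
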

\begin{proof}[Proof sketch]
See \citet{gandhi2006dependent} for result that Pipage rounding preserves marginals and induces NA.
\end{proof}

\begin{lemma}[Hoeffding bound for NA variables - see, e.g., \citep{dubhashi1996balls}]
\label{lem:weighted-hoeffding-na}
For generic set $S$, let $(X_i)_{i\in S}$ be NA, $X_i\in\{0,1\}$ with $\mathbb{E}[X_i]=p_i$. For fixed weights $a_i\in[0,1]$, define $Y=\sum_{i\in S} a_i X_i$ and $\mu=\mathbb{E}[Y]=\sum a_i p_i$. Then for all $z>0$,
\begin{align*}
\Pr[\,Y-\mu \ge z\,] &\le \exp\!\left( -\frac{2z^2}{\sum_{i\in S} a_i^2} \right), &
\Pr[\,\mu-Y \ge z\,] &\le \exp\!\left( -\frac{2z^2}{\sum_{i\in S} a_i^2} \right).
\end{align*}
\end{lemma}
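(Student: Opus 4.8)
The plan is to use the standard exponential-moment (Chernoff) method, exploiting the one property of negative association that matters here: the moment generating function of a weighted sum factorizes into a product bound. I would prove the upper tail first and then deduce the lower tail by complementing the indicators.

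First I would fix $s>0$ and write the centered sum as $Y-\mu=\sum_{i\in S} a_i(X_i-p_i)$, so that $\E[e^{s(Y-\mu)}]=\E\big[\prod_{i\in S} e^{s a_i(X_i-p_i)}\big]$. Since $a_i\ge 0$ and $s>0$, each factor $x\mapsto e^{s a_i(x-p_i)}$ is a nondecreasing function of the single coordinate $X_i$, and the factors depend on disjoint (singleton) index sets. Extending the pairwise NA inequality to a product of $|S|$ coordinatewise-nondecreasing functions by a routine induction on the number of factors, I obtain the factorization $\E[e^{s(Y-\mu)}]\le \prod_{i\in S}\E[e^{s a_i(X_i-p_i)}]$.

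Next I would bound each factor with Hoeffding's lemma. Because $X_i\in\{0,1\}$ and $a_i\in[0,1]$, the mean-zero variable $a_i(X_i-p_i)$ takes values in an interval of width exactly $a_i$, so Hoeffding's lemma gives $\E[e^{s a_i(X_i-p_i)}]\le \exp(s^2 a_i^2/8)$. Multiplying over $i$ yields $\E[e^{s(Y-\mu)}]\le \exp\!\big(\tfrac{s^2}{8}\sum_{i\in S} a_i^2\big)$. A Chernoff bound $\Pr[Y-\mu\ge z]\le e^{-sz}\,\E[e^{s(Y-\mu)}]$ then gives $\exp(-sz+\tfrac{s^2}{8}\sum_{i\in S}a_i^2)$, and minimizing over $s$ at $s=4z/\sum_{i\in S}a_i^2$ produces the claimed upper-tail bound $\exp(-2z^2/\sum_{i\in S}a_i^2)$.

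For the lower tail, I would run the identical argument on the complemented indicators $X_i':=1-X_i$, which are again NA (negative association is preserved when each coordinate is transformed by a monotone map, all in a common direction) with $\E[X_i']=1-p_i$. Since $\mu-Y=\sum_{i\in S}a_i X_i'-\sum_{i\in S}a_i(1-p_i)$, the lower tail of $Y$ is exactly the upper tail of $\sum_{i\in S}a_i X_i'$, whose weights $a_i$ and squared-weight sum $\sum a_i^2$ are unchanged, so the same computation delivers the matching bound. The only step requiring genuine care\emdash and the one I would expect to be the crux\emdash is the NA factorization of the MGF: one must verify that the two-set definition of negative association extends to a product over all of $S$ and that the monotonicity directions line up (nondecreasing for $s>0$), which is exactly why the complementation device, rather than simply taking $s<0$, is the cleanest route to the lower tail.
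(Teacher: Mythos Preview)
Your proof is correct and is essentially the standard argument (MGF factorization under NA, Hoeffding's lemma on each summand, Chernoff optimization, then complementation for the lower tail). The paper does not supply its own proof of this lemma: it is stated as a citation to \citet{dubhashi1996balls} and used as a black box in the proof of Theorem~\ref{thm:quotas-new}. Your argument is exactly the route one finds in that reference, so there is nothing to contrast.
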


\subsection{Proof of \Cref{thm:quotas-new}}
\label{app:pipage-affine-expected-proof}

\begin{theorem}[Affine expected-quota concentration for \textsc{MinMax-Pipage}]\label{app:quotas-round}
Fix inputs \(\inst,\hatd,\gamma,\alpha,\beta,\eta,T\), and let \(\tilde K\) be the (random) panel produced by \textsc{MinMax-Pipage}.
Fix any $\delta \in \Delta(\hatd,\gamma)$, any $(f,v) \in FV$, and any $z > 0$. with probability at least $1-2\exp\left(-\tfrac{2z^2}{\sigma^2_{f,v}}\right)$,
\begin{align*}
&\max_{\dagger \in \{-,+\}}\E_{D\sim\delta}\left[\dev^{\dagger}_{f,v}(\tilde{K};D)\right]
\leq \mathcal{L}(\pvec^*,\delta;\inst) + \epsilon + \tfrac{z}{u_{f,v}},
\end{align*}
where 
$\sigma^2_{f,v} = \sum_{i \in N_{f,v}} \Pr_{D\sim\delta}[i \notin D]^2$, $\epsilon = O\left(\sqrt{\frac{n}{T}}\right)$
\end{theorem}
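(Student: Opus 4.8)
The statement compares the $\delta$-averaged per-$(f,v)$ deviation of the rounded panel $\tilde K$ to the loss of the optimal fractional panel $\pvec^*$, so I would split the gap along the chain $\tilde K \to \hatp \to \pvec^*$ and bound each link:
\[
\max_{\dagger}\E_{D\sim\delta}\!\big[\dev^{\dagger}_{f,v}(\tilde K;D)\big]
\;\le\;
\max_{\dagger}\E_{D\sim\delta}\!\big[\dev^{\dagger}_{f,v}(\hatp;D)\big] \;+\; \tfrac{z}{u_{f,v}},
\]
with the first (rounding) inequality holding with the stated probability and the second (optimality) term handled deterministically. The optimality term is the easy half: since $\dev^{\dagger}_{f,v}(\hatp;D)\le h(\hatp;D)$ for every $D$, applying $\E_{D\sim\delta}$ and then $\max_{\dagger}$ gives $\max_{\dagger}\E_{D}[\dev^{\dagger}_{f,v}(\hatp;D)]\le\E_D[h(\hatp;D)]=\mathcal{L}(\hatp,\delta;\inst)$, and \Cref{thm:no-regret} then bounds $\mathcal{L}(\hatp,\delta;\inst)\le\max_{\delta'}\mathcal{L}(\hatp,\delta';\inst)\le\max_{\delta'}\mathcal{L}(\pvec^*,\delta';\inst)+\epsilon$, supplying the $\epsilon=O(\sqrt{n/T})$ term (with $\max_{\delta'}\mathcal{L}(\pvec^*,\delta';\inst)$ being the game value that appears on the right-hand side).

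\textbf{The rounding step, and the main obstacle.} The crux is concentrating the rounded count around its fractional mean while correctly interleaving the two independent sources of randomness, Pipage and $\delta$. Writing $X_i:=\mathbb{I}(i\in\tilde K)$ and, for the fixed $\delta$, the constants $q_i:=\Pr_{D\sim\delta}[i\notin D]\in[0,1]$, I would first push the expectation over $D$ inside (valid because Pipage and $\delta$ are independent), turning the object of interest into a fixed-weight linear form in the Pipage indicators:
\[
\E_{D\sim\delta}\big[S_{f,v}(\tilde K;D)\big]=\sum_{i\in N_{f,v}} X_i\,q_i=:Y.
\]
By \Cref{lem:na-pipage} the $X_i$ are negatively associated with $\E[X_i]=\hat p_i$ (marginal preservation), so $Y$ is a weighted sum of negatively associated $\{0,1\}$ variables with weights $q_i\in[0,1]$ and mean $\mu=\sum_{i\in N_{f,v}}\hat p_i q_i=\E_{D\sim\delta}[S_{f,v}(\hatp;D)]$. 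Since $\sum_{i\in N_{f,v}}q_i^2=\sigma^2_{f,v}$, the negatively-associated Hoeffding bound (\Cref{lem:weighted-hoeffding-na}) gives the two one-sided tails $\Pr[Y\ge\mu+z]\le\exp(-2z^2/\sigma^2_{f,v})$ and $\Pr[Y\le\mu-z]\le\exp(-2z^2/\sigma^2_{f,v})$.

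\textbf{From counts to deviations, and combining.} On the event $\{Y\le\mu+z\}$ the upper deviation obeys $\E_D[\dev^{+}_{f,v}(\tilde K;D)]=\tfrac1{u_{f,v}}(Y-u_{f,v})\le\E_D[\dev^{+}_{f,v}(\hatp;D)]+\tfrac{z}{u_{f,v}}$, and symmetrically on $\{Y\ge\mu-z\}$ the lower deviation obeys $\E_D[\dev^{-}_{f,v}(\tilde K;D)]=\tfrac1{u_{f,v}}(\ell_{f,v}-Y)\le\E_D[\dev^{-}_{f,v}(\hatp;D)]+\tfrac{z}{u_{f,v}}$. A union bound over the two failure events (total mass $2\exp(-2z^2/\sigma^2_{f,v})$) makes both hold at once, which is exactly the rounding inequality; chaining it with the optimality bound yields the theorem. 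I expect the main subtlety to be precisely this ordering: one must average over $D$ first so that the weights $q_i$ are constants and $Y$ is a genuine linear form in negatively associated indicators\emdash if instead one kept $D$ random and tried to concentrate $h(\tilde K;D)$ directly, the maximum over $(f,v)$ and sign would break the clean NA--Hoeffding structure. A secondary care point is that the $\epsilon$ term is controlled relative to the minimax value $v^*=\max_{\delta'}\mathcal{L}(\pvec^*,\delta';\inst)$, which is the quantity that \Cref{thm:no-regret} actually bounds.
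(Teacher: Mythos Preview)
Your proposal is correct and follows essentially the same route as the paper: average over $D$ first to reduce to the weighted linear form $Y=\sum_{i\in N_{f,v}} q_i X_i$ with $q_i=\Pr_{D\sim\delta}[i\notin D]$, apply the NA--Hoeffding bound (\Cref{lem:weighted-hoeffding-na}) to $Y$ around $\mu=\sum_i q_i\hat p_i$, translate the $|Y-\mu|\le z$ event into per-sign deviation bounds via division by $u_{f,v}$, and then chain through $\mathcal{L}(\hatp,\delta)$ using \Cref{thm:no-regret}. Your flag that \Cref{thm:no-regret} controls $\epsilon$ relative to the game value $v^*=\max_{\delta'}\mathcal{L}(\pvec^*,\delta')$ rather than the specific $\mathcal{L}(\pvec^*,\delta)$ is well taken; the paper's proof makes the same move and simply upper-bounds by $v^*$ before writing the stated right-hand side.
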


\label{thm:pipage-affine-expected}

\begin{proof}
Let \(X_i := \mathbb{I}(i\in \tilde K)\in\{0,1\}\) be the indicator variables produced by \textsc{Pipage} rounding on \(\hatp\). By the properties of \textsc{Pipage} (Lemma~\ref{lem:na-pipage}):
(i) \(\mathbb{E}[X_i]=\hat p_i\) (marginals preserved), and
(ii) the family \((X_i)_{i\in N}\) is negatively associated (NA).

Fix \((f,v)\). For any \(D\subseteq N\),
\[
S_{f,v}(\tilde K;D)=\sum_{i\in N_{f,v}} X_i\,\mathbb{I}(i\notin D),
\qquad
S_{f,v}(\hatp;D)=\sum_{i\in N_{f,v}} \hat p_i\,\mathbb{I}(i\notin D).
\]
Taking expectation over \(D\sim\delta\) (independent of \(\tilde K\)) and using \(a_i=\Pr[i\notin D]\),
\[
\mathbb{E}_{D}[S_{f,v}(\tilde K;D)\mid \tilde K] = \sum_{i\in N_{f,v}} a_i X_i
=: Y_{f,v}(X),
\qquad
\mathbb{E}_{D}[S_{f,v}(\hatp;D)] = \sum_{i\in N_{f,v}} a_i \hat p_i =: \mu_{f,v}.
\]
Since \((X_i)\) are NA 0–1 and the weights \(a_i\in[0,1]\), the NA weighted Hoeffding inequality (Lemma~\ref{lem:weighted-hoeffding-na}) applied to the linear form \(Y_{f,v}(X)=\sum_{i\in N_{f,v}} a_i X_i\) with variance proxy \(\sigma_{f,v}^2=\sum_{i\in N_{f,v}} a_i^2\) yields, for any \(z>0\),
\[
\Pr\big[\,|Y_{f,v}(X)-\mu_{f,v}| \ge z\,\big] \;\le\; 2\exp\!\Big(-\tfrac{2 z^2}{\sigma_{f,v}^2}\Big).
\]
Therefore, by the definitions of \(\dev^-_{f,v}\) and \(\dev^+_{f,v}\),
\[
\mathbb{E}_{D}\left[\dev^-_{f,v}(\tilde K)\right] - \mathbb{E}_{D}\left[\dev^-_{f,v}(\hatp)\right]
= \mathbb{E}_{D}\Big[\tfrac{\ell_{f,v}-S_{f,v}(\tilde K;D)}{u_{f,v}} - \tfrac{\ell_{f,v}-S_{f,v}(\hatp;D)}{u_{f,v}} \,\Big|\, \tilde K\Big]
= \tfrac{\mu_{f,v}-Y_{f,v}(X)}{u_{f,v}},
\]
\[
\mathbb{E}_{D}\left[\dev^+_{f,v}(\tilde K)\right] - \mathbb{E}_{D}\left[\dev^+_{f,v}(\hatp)\right]
= \mathbb{E}_{D}\Big[\tfrac{S_{f,v}(\tilde K;D)-u_{f,v}}{u_{f,v}} - \tfrac{S_{f,v}(\hatp;D)-u_{f,v}}{u_{f,v}} \,\Big|\, \tilde K\Big]
= \tfrac{Y_{f,v}(X)-\mu_{f,v}}{u_{f,v}}.
\]
Hence both absolute deviations are equal to \(|Y_{f,v}(X)-\mu_{f,v}|/u_{f,v}\).

 By dividing both sides of $|Y_{f,v}(X)-\mu_{f,v}| \ge z$ by $u_{f,v}$ we get,
\[
|\mathbb{E}_{D}\left[\dev^-_{f,v}(\tilde K)\right] - \mathbb{E}_{D}\left[\dev^-_{f,v}(\hatp)\right]| \le \tfrac{z}{u_{f,v}},
\qquad
|\mathbb{E}_{D}\left[\dev^+_{f,v}(\tilde K)\right] - \mathbb{E}_{D}\left[\dev^+_{f,v}(\hatp)\right]| \le \tfrac{z}{u_{f,v}}.
\]

Moreover, since \(\mathbb{E}_{D}\left[\dev^-_{f,v}(\hatp)\right]\le \mathcal{L}(\hatp, \delta ; \inst)\) and \(\mathbb{E}_{D}\left[\dev^+_{f,v}(\hatp)\right]\le \mathcal{L}(\hatp, \delta ; \inst)\,]\), and Theorem~\ref{thm:no-regret} gives $$\mathcal{L}(\hatp, \delta ; \inst) \le \max_{\delta\in\Delta(\hatd,\gamma)} \mathcal{L}(\pvec^*,\delta;\inst) + O(\sqrt{n/T})$$
This is equivalently a bound on arbitrary $\delta\in\Delta(\hatd,\gamma)$, as the RHS can only grow with the removal of the maximum, 
$$\implies \mathcal{L}(\hatp, \delta ; \inst) \le  \mathcal{L}(\pvec^*,\delta;\inst) + O(\sqrt{n/T})$$

Define $v^* := \mathcal{L}(\pvec^*,\delta;\inst)$. We can write the following equivalent probability statements and subsequently apply \cref{lem:weighted-hoeffding-na}:

\begin{align*}
    \Pr\!\Big[
\;|\mathbb{E}_{D}\left[\dev^-_{f,v}(\tilde K)\right] - \mathbb{E}_{D}\left[\dev^-_{f,v}(\hatp)\right]| \le \tfrac{z}{u_{f,v}}
\;\;
\Big] = \Pr\!\Big[\mathbb{E}_{D}\left[\dev^+_{f,v}(\tilde K)\right]\;\le\; \mathcal{L}(\pvec^*,\delta;\inst) + O\!\big(\sqrt{n/T}\big) + \tfrac{z}{u_{f,v}}
\Big]\\
\geq 1 -  2\exp\!\Big(-\tfrac{2 z^2}{\sigma_{f,v}^2}\Big)
\end{align*}

\begin{align*}
    \Pr\!\Big[\;\;
|\mathbb{E}_{D}\left[\dev^+_{f,v}(\tilde K)\right] - \mathbb{E}_{D}\left[\dev^+_{f,v}(\hatp)\right]| \le \tfrac{z}{u_{f,v}}
\Big] = \Pr\!\Big[\mathbb{E}_{D}\left[\dev^+_{f,v}(\tilde K)\right]\;\le\; \mathcal{L}(\pvec^*,\delta;\inst) + O\!\big(\sqrt{n/T}\big) + \tfrac{z}{u_{f,v}}
\Big]\\ \geq 1 -  2\exp\!\Big(-\tfrac{2 z^2}{\sigma_{f,v}^2}\Big)
\end{align*}

Combining these statements and taking a maximum over the direction $\dagger \in \{+,-\}$, we arrive at the theorem statement for any $\delta \in \Delta(\hatd,\gamma)$, any $(f,v) \in FV$, and any $z > 0$:

\begin{align*}
&\Pr \left[\max_{\dagger \in \{-,+\}}\E_{D\sim\delta}\left[\dev^{\dagger}_{f,v}(\tilde{K};D)\right]
\leq \mathcal{L}(\pvec^*,\delta;\inst) + O\left(\sqrt{\frac{n}{T}}\right) + \tfrac{z}{u_{f,v}} \right] \geq 1 -  2\exp\!\Big(-\tfrac{2 z^2}{\sigma_{f,v}^2}\Big),
\end{align*}

\end{proof}

\textbf{Remark:} The rounding procedure is deterministically sum-preserving, so if one has constrained $\|\hatp\|_1 = k^+$, $\tilde K$ is size $k^+$ deterministically, and $\mathbb{E}_{D \sim \delta}[|\tilde{K} \setminus D|] = k^+ - \mathbb{E}_{D \sim \delta}[|D \cap \tilde{K}|]$.

\newpage
\section{Supplemental Materials from \Cref{section:experiments}}
\subsection{Data Information} \label{app:data}
Our data consists of six instances of previously occurred panels. Each instance has columns for the individual features, inclusion in the panel (\textsc{selected}) and whether they dropped out(\textsc{is\_dropout}) if they \textit{were} selected. Instances 1,3,5,6 had predetermined, practitioner-constructed quotas. Both Instance 2 and 4 did not have practitioner constructed quotas, and we manually constructed a set to reflect the makeup of the \textit{original panel}. F each feature-value pair, we counted the number of people selected for that panel with that feature-value pair ($n_{f,v}$), and set the associated quota to be $(\ell_{f,v}, u_{f,v}) = n_{f,v} \pm 2$. In the table below, we report statistics on the size of the pool $n$, desired size of the panel $k$ and number of distinct feature value pairs $|FV|$. 

\begin{table}[h]
\centering
\begin{tabular}{|c|c|c|c|}
\hline
$\inst$ & $n$ & $k$ & $|FV|$ \\ \hline
1       & 250 & 51  & 17     \\ \hline
2       & 125 & 52  & 12     \\ \hline
3       & 133 & 41  & 22     \\ \hline
4       & 290 & 50  & 12     \\ \hline
5       & 518 & 47  & 11     \\ \hline
6       & 427 & 49  & 28     \\ \hline
\end{tabular}
\end{table}

\subsection{Greedy Best Response over Product Distributions}
\label{app:br-greedy}
We formally define the modified action space from \cref{section:experiments} below. The key alteration being is the change from a distribution over dropout sets $D \subseteq N$ to a distribution over marginal inclusion probabilities $\dvec \in [0,1]^n$. This represents a strict subset of the original action space in our theoretical formulation, and consequently, the best response on this more constrained set is only an \textit{approximation} to the best response problem over \textbf{all} distributions over subsets. The gain is that the algorithm is substantially more efficient to run compared to our exact Ellipsoid-based best response algorithm, and still leads to strong experimental results.

\textbf{The (new) action space of the maximizer.} As in \citet{assos2025alternates}, we model the dropouts from any panel $\tilde{K} \subseteq N$ as the result of independent Bernoulli coin-flips per agent $i \in \tilde{K}$, with agent-specific rate $d_i$. $d_i$ is then $i$'s \textit{dropout probability}. For mathematical convenience, we assume that every \textit{pool member} has an associated dropout probability, summarized as $\mathbf{d} = (d_i | i \in N)$. We define the \textit{dropout set} $D \subset \tilde{K}$ as the random set of dropouts; we abuse notation slightly and express this random process as $D \sim \mathbf{d}$. 

Given inputs $\hatd$ and $\gamma$, the maximizer's action space is
\begin{equation}
    \mathfrak{D}(\hatd,\gamma):= [0,1]^n \ \cap \ B^\infty_\gamma(\hatd) \ \cap \ \{\dvec : \|\dvec\|_1 = \|\hatd\|_1\},
\end{equation}
where $B^\infty_\gamma(\hatd)$ is the $\gamma$-radius $L_\infty$-norm ball around $\hatd$. The first set requires $\dvec$'s entries to be valid probabilities; the second is the requirement specified by our notion of robustness; and the third is a weak requirement to avoid the adversary gaining power by dropping out a different expected number of \textit{pool members} (not panelists) than would $\hatd$.

We present an updated loss function of $\pvec$ on dropout distribution $\dvec$ and the associated optimization problem: 

\begin{equation}
    \mathcal{L}(\pvec,\dvec;\inst) := \max_{f,v \in FV} \mathbb{E}_{D \sim \dvec}[\mathcal{L}_{f,v}(\pvec \setminus D;\inst)] 
\end{equation}

\begin{equation} \label{eq:opt-prob-emp}
   \pvec^*(\inst,\hatd,\gamma,\alpha,\beta):= \text{arg}\min_{\pvec \in \mathfrak{P}(\alpha,\beta)} \ \max_{\dvec \in \mathfrak{D}(\hatd,\gamma)}  \ \mathcal{L}(\pvec, \dvec;\inst) 
\end{equation}

This new setup allows for a simple, greedy maximizer best response algorithm. The algorithm works as follows: for each feature value pair $f,v$, the algorithm calculates the result of \textit{dropping out} the maximum possible number of pool members with $f,v$, or conversely \textit{retaining} the maximal number of pool members with $f,v$, each time projecting results with \textsc{Project-Dropout-Probs}. The best response to the panel probabilities $\pvec$ is the argument maximum of the loss-maximizing $\dvec$ for each $f,v$ pair and upper and lower quota violation. 

\begin{algorithm}[H]
\caption{\textsc{Greedy-Best-Response}}
\label{alg: greedy_br}
\begin{algorithmic}[1]
\Require $\inst$, $\hatd$, $\gamma$, $\mathbf{p}$

\State $\dvec^{min} \leftarrow \max(0, \hatd - \gamma)$, $\dvec^{max} \leftarrow \min(1, \hatd + \gamma)$

\For{each $f,v \in FV$}\vspace{0.25em}
    \State Initialize $N_{f,v} \leftarrow \{i : f(i) = v\}\ \text{sorted in ascending order of } p(i)$, 
    $\overline{N}_{f,v} \leftarrow N \setminus N_{f,v}$, 
    \vspace{0.25em}
    \State $\mathbf{d}^{\text{keep}} \leftarrow \mathbf{0}$,  
    $\mathbf{d}^{\text{drop}} \leftarrow \mathbf{0}$ \vspace{0.25em}
    \State $d_i^{\text{keep}} \leftarrow d^\text{min}_i$ for all $i \in N_{f,v}$    \Comment{\textbf{Strategy 1}: keep as many people with $f(i) = v$ as possible}
\vspace{0.25em}
    \State $d_i^{\text{keep}} \leftarrow d^\text{max}_i$ for all $i \in \overline{N}_{f,v}$ \vspace{0.25em}
    \State $\bar{\mathbf{d}}^{\text{keep}} \leftarrow \textsc{Project-Dropout-Probs}(\inst, \hatd, \mathbf{d}^{\text{keep}}, \dvec^{min}, \dvec^{max}, \overline{N}_{f,v} + N_{f,v})$
    \State $d_i^{\text{drop}} \leftarrow d^\text{max}_i$ for all $i \in N_{f,v}$    \Comment{\textbf{Strategy 2}: drop as many people with $f(i) = v$ as possible}
\vspace{0.25em}
    \State $d_i^{\text{drop}} \leftarrow d^\text{min}_i$ for all $i \in N_{f,v}$\vspace{0.25em}
    \State $\bar{\mathbf{d}}^{\text{drop}} \leftarrow \textsc{Project-Dropout-Probs}(\inst, \hatd, \mathbf{d}^{\text{drop}}, \dvec^{min}, \dvec^{max}, \overline{N}_{f,v} + N_{f,v})$ \vspace{0.25em}
    
    \State $\dvec^*_{f,v} \leftarrow \text{argmax}_{\dvec \in \{\bar{\mathbf{d}}^{\text{keep}},\bar{\mathbf{d}}^{\text{drop}}\}} \mathcal{L}(\pvec,\dvec)$

\EndFor
\State \Return argmax$_{f,v \in FV}\mathbf{d}^*_{f,v}$
\end{algorithmic}
\end{algorithm}

\begin{algorithm}[H]
\caption{\textsc{Project-Dropout-Probs}}
 \label{alg:greedyproject}
\begin{algorithmic}[1]
\Require $\inst, \hatd, \mathbf{d}^{\text{keep}}, \dvec^{min}, \dvec^{max}, \overline{N}_{f,v} + N_{f,v}$
\State $\text{diff} \leftarrow \sum_{i=1}^n \hatd_i - \sum_{i=1}^n d_i$
\If{$\text{diff} = 0$} \Return $\mathbf{d}$ \EndIf
\For{each $i \in \overline{N}_{f,v}, \text{ then } N_{f,v}$}
    \If{$\text{diff} < 0$}
        \State $\text{adj} \leftarrow \min(\text{diff}, d_i - \dvec^{min}_i)$
        \State $d_i \leftarrow d_i - \text{adj}$
    \Else
        \State $\text{adj} \leftarrow \min(-\text{diff}, \dvec^{max}_i - d_i)$
        \State $d_i \leftarrow d_i + \text{adj}$
    \EndIf
    \If{$\sum_{i=1}^n \hatd_i - \sum_{i=1}^n d_i = 0$} \textbf{break} \EndIf
\EndFor
\State \Return $\mathbf{d}$

\end{algorithmic}
\end{algorithm}

\begin{theorem}[Runtime of \textsc{Greedy-Best-Response}]
Algorithm~\ref{alg: greedy_br} (\textsc{Greedy-Best-Response}) runs in time  $O(|\FV| \cdot n)$.
\end{theorem}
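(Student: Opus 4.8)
The plan is to bound the running time of \textsc{Greedy-Best-Response} (\Cref{alg: greedy_br}) by separately accounting for a one-time preprocessing phase, the body of the outer \textbf{for} loop over feature-value pairs, and the final selection, and then summing these contributions. The central claim to establish is that, after appropriate preprocessing, each iteration of the outer loop costs only $O(n)$, so that the $|\FV|$ iterations together contribute $O(|\FV|\cdot n)$.

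First I would dispatch the preprocessing. Computing $\dvec^{\mathrm{min}}=\max(\vzero,\hatd-\gamma)$ and $\dvec^{\mathrm{max}}=\min(\vone,\hatd+\gamma)$ is a single coordinatewise pass, $O(n)$. The one operation inside the loop that is not obviously linear is the instruction to keep $N_{f,v}$ sorted in ascending order of $p(i)$. To keep sorting out of the inner loop, I would sort the whole pool $N$ by $p_i$ \emph{once}, before the loop, at cost $O(n\log n)$; then, in each iteration, $N_{f,v}$ in sorted order and $\overline{N}_{f,v}=N\setminus N_{f,v}$ are obtained by a single linear scan that filters the global order, in $O(n)$. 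The one-time $O(n\log n)$ term is dominated by $O(|\FV|\cdot n)$ and does not affect the stated bound.

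Next I would show the per-iteration cost is $O(n)$. Building $N_{f,v},\overline{N}_{f,v}$ and the assignments $d_i^{\mathrm{keep}},d_i^{\mathrm{drop}}\in\{d_i^{\mathrm{min}},d_i^{\mathrm{max}}\}$ are each linear scans, $O(n)$. Each call to \textsc{Project-Dropout-Probs} (\Cref{alg:greedyproject}) makes a single pass over $\overline{N}_{f,v}$ followed by $N_{f,v}$\,---\,together all of $N$\,---\,with $O(1)$ work per index (a sign test, a clipped additive update, and a running-sum check), so each projection is $O(n)$ and the two calls cost $O(n)$. The remaining subtlety is the comparison $\dvec^*_{f,v}\leftarrow\argmax_{\dvec\in\{\bar{\mathbf d}^{\mathrm{keep}},\bar{\mathbf d}^{\mathrm{drop}}\}}\mathcal L(\pvec,\dvec)$: a naive evaluation of the full loss $\mathcal L=\max_{g,w}\E_{D\sim\dvec}[\mathcal L_{g,w}]$ costs $O(n|F|)$ per candidate and would inflate the total to $O(|\FV|^2 n)$. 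I would avoid this by observing that both candidates are constructed specifically to stress the quota on the current pair $(f,v)$, so the comparison needs only the $(f,v)$ component $\max\{\dev^-_{f,v},\dev^+_{f,v}\}$, which depends on $\E_{D\sim\dvec}[S_{f,v}]=\sum_{i\in N_{f,v}}p_i(1-d_i)$ and is computable in $O(|N_{f,v}|)\le O(n)$. Summing the linear pieces gives $O(n)$ per iteration.

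Finally, I would record for each $(f,v)$ the targeted deviation attained by $\dvec^*_{f,v}$, so that the return statement $\argmax_{f,v}$ reduces to a maximum over $|\FV|$ stored scalars, in $O(|\FV|)$ time, with no loss re-evaluation. (Correctness of this shortcut\,---\,that the overall best response is the distribution maximizing the single largest achievable per-pair deviation\,---\,follows because the adversary's objective is itself a maximum over feature-value pairs.) Adding the preprocessing $O(n\log n)$, the loop body $|\FV|\cdot O(n)$, and the final $O(|\FV|)$ selection yields $O(|\FV|\cdot n)$. The main obstacle is exactly the point of the third paragraph: as written, the algorithm places a sort and a loss comparison inside the loop, and the bound holds only once one amortizes the sort via a single global presort and restricts the per-iteration loss comparison to the targeted $(f,v)$ component; without these two observations the naive count degrades to $O(|\FV|\,n\log n)$ or $O(|\FV|^2 n)$.
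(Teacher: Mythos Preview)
Your proposal is correct and follows the same high-level structure as the paper's proof: show each of the $|\FV|$ iterations costs $O(n)$ and sum. The paper's proof simply asserts that constructing the two candidate vectors is $O(n)$, that each projection is $O(n)$, and that ``evaluating the loss function $\mathcal{L}(\pvec,\dvec)$ twice'' costs $O(n)$ total.

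You are more careful than the paper on two points it glosses over. First, the algorithm as written sorts $N_{f,v}$ by $p_i$ inside the loop; the paper does not mention this, while you amortize it via a single global presort in $O(n\log n)$ and then filter in $O(n)$ per iteration. Second, the paper asserts the loss comparison is $O(n)$ without justification, whereas a literal evaluation of $\mathcal{L}(\pvec,\dvec)=\max_{(g,w)\in\FV}\{\cdot\}$ would cost $O(n|F|)$ and inflate the total; you correctly observe that only the targeted $(f,v)$ deviation is needed for the per-iteration comparison and for the final $\argmax$, making both steps $O(n)$ and $O(|\FV|)$ respectively. Your version thus makes explicit the two implementation choices needed for the stated bound to hold; the paper's proof implicitly relies on them.
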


\begin{proof}
For each feature–value pair $(f,v) \in \FV$, the algorithm:
\begin{enumerate}
    \item Constructs candidate dropout vectors 
    $\mathbf{d}^{\text{keep}}$ and $\mathbf{d}^{\text{drop}}$ in $O(n)$ time;
    \item Projects each via \textsc{Project-Dropout-Probs}, which iterates once through all agents and hence also runs in $O(n)$ time;
    \item Evaluates the loss function $\mathcal{L}(\mathbf{p}, \mathbf{d})$ twice, costing $O(n)$ total.
\end{enumerate}
Thus, the total work per feature–value pair is $O(n)$, and iterating over all $|\FV|$ such pairs gives 
$O(|\FV| \cdot n)$ overall.  
\end{proof}

\begin{theorem}[Correctness of \textsc{Greedy-Best-Response}]
Fix $\inst, \gamma, \hatd$ and selection probabilities $\pvec$. Algorithm~\ref{alg: greedy_br} returns $\dvec$ such that $$\dvec = \argmax_{\dvec \in \mathfrak{D}(\hatd,\gamma)} \mathcal{L}(\pvec, \dvec \; \inst)$$.
\end{theorem}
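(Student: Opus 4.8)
The plan is to exploit the fact that, over product distributions, $\mathcal{L}(\pvec,\cdot)$ is a pointwise maximum of finitely many \emph{affine} functions of the marginal vector $\dvec$, while the feasible set $\mathfrak{D}(\hatd,\gamma)$ is a box intersected with a single $\ell_1$ (sum) hyperplane. Since dropouts are independent, linearity of expectation gives $\mathbb{E}_{D\sim\dvec}[S_{f,v}(\pvec\setminus D)]=\sum_{i\in N_{f,v}}p_i(1-d_i)$, so each signed per-group objective $L_{(f,v,\dagger)}(\dvec):=\mathbb{E}_{D\sim\dvec}[\dev^{\dagger}_{f,v}(\pvec\setminus D)]$ is affine in $\dvec$, and $\mathcal{L}(\pvec,\dvec)$ is the maximum of the resulting $2|\FV|$ affine pieces. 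First I would use this to swap the two maximizations, $\max_{\dvec\in\mathfrak{D}}\mathcal{L}(\pvec,\dvec)=\max_{(f,v,\dagger)}\max_{\dvec\in\mathfrak{D}}L_{(f,v,\dagger)}(\dvec)$: each piece lower-bounds $\mathcal{L}$, and at any global maximizer the active piece value is dominated by that piece's own maximum. It therefore suffices to (i) compute the maximizer of each affine piece over $\mathfrak{D}$, and (ii) return whichever candidate attains the largest value of the \emph{full} loss $\mathcal{L}$ — which is exactly what the two final \textsc{argmax} steps of Algorithm~\ref{alg: greedy_br} do.

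Next I would characterize the maximizer of a single piece. Maximizing $L_{(f,v,-)}$ is equivalent to minimizing $S_{f,v}$, i.e. maximizing $\sum_{i\in N_{f,v}}p_i d_i$; since only the $N_{f,v}$-coordinates carry nonzero coefficient, this pushes $d_i$ to $d_i^{\max}$ on $N_{f,v}$ and to $d_i^{\min}$ on $\overline N_{f,v}$. Maximizing $L_{(f,v,+)}$ is the mirror image ($d_i^{\min}$ on $N_{f,v}$, $d_i^{\max}$ off it). These are precisely the seed vectors $\mathbf{d}^{\mathrm{drop}}$ and $\mathbf{d}^{\mathrm{keep}}$ built before projection, so the two candidates the algorithm forms for each $(f,v)$ correspond to the two signs $\dagger\in\{-,+\}$ of that group.

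The main obstacle is showing that \textsc{Project-Dropout-Probs} turns each seed into the \emph{exact} maximizer of its affine piece subject to $\|\dvec\|_1=\|\hatd\|_1$. I would handle feasibility first: since $\hatd\in\mathfrak{D}(\hatd,\gamma)$ witnesses that the target sum lies in $[\sum_i d_i^{\min},\sum_i d_i^{\max}]$, the one-pass projection can always reach it and terminates at a feasible point on the hyperplane. For optimality I would give an exchange argument. The projection restores the budget by touching $\overline N_{f,v}$ first — coordinates with zero coefficient in the active piece, so these moves leave the objective unchanged — and only afterwards the $N_{f,v}$ coordinates in ascending order of $p_i$ (smallest-coefficient-first). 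Then any alternative feasible point can be transformed into the algorithm's output by repeatedly shifting budget from a lower-$p_i$ to a higher-$p_i$ $N_{f,v}$-coordinate (absorbing slack in the free $\overline N_{f,v}$-coordinates) without decreasing $\sum_{i\in N_{f,v}}p_i d_i$; this forces the projected candidate to be an exact argmax of its piece. The $+$ case is symmetric, with the roles of $d_i^{\min}/d_i^{\max}$ and the effect of the ordering reversed.

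Finally I would stitch the pieces together. Let $\dvec^\star$ be a global maximizer and let $(f,v,\dagger)$ be an active piece there, so $\mathcal{L}(\pvec,\dvec^\star)=L_{(f,v,\dagger)}(\dvec^\star)$. By step (i) the corresponding candidate $\bar{\mathbf d}$ satisfies $L_{(f,v,\dagger)}(\bar{\mathbf d})\ge L_{(f,v,\dagger)}(\dvec^\star)=\mathcal{L}(\pvec,\dvec^\star)$, and since $\mathcal{L}(\pvec,\bar{\mathbf d})\ge L_{(f,v,\dagger)}(\bar{\mathbf d})$ we obtain $\mathcal{L}(\pvec,\bar{\mathbf d})\ge \max_{\dvec\in\mathfrak{D}}\mathcal{L}(\pvec,\dvec)$, forcing equality. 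Because the algorithm evaluates the full loss on every one of the $2|\FV|$ candidates and returns the maximizer, it outputs a $\dvec$ attaining the global optimum, which is the claim.
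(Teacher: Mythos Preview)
Your proposal is correct and follows essentially the same strategy as the paper: both decompose $\mathcal{L}(\pvec,\cdot)$ into the $2|\FV|$ per-group, per-sign affine pieces, argue that the seed-plus-projection (complement coordinates first, then $N_{f,v}$ in ascending $p_i$) yields the exact optimizer of each piece over $\mathfrak{D}(\hatd,\gamma)$, and conclude by comparing the full loss across all candidates. Your direct max-swap framing and exchange argument are slightly cleaner than the paper's proof-by-contradiction, but the substance is identical.
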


\begin{proof}
We prove that the \textsc{Greedy-Best-Response} algorithm (with the projection subroutine run using the index order ``complement first, then group members in ascending order of $p_i$'') returns a global maximizer of
\[
\mathcal L(\pvec,\dvec)
\;=\;
\max_{(f,v)\in\FV}\; \max\!\left\{\,\frac{\ell_{f,v}-S_{f,v}(\pvec,\dvec)}{u_{f,v}},\; \frac{S_{f,v}(\pvec,\dvec)-u_{f,v}}{u_{f,v}}\,\right\},
\]
where
\[
S_{f,v}(\pvec,\dvec) \;=\; \sum_{i\in N_{f,v}} p_i(1-d_i).
\]

Let $\dvec^\star$ be the vector returned by the algorithm. Suppose, for contradiction, that there exists a feasible $\dvec'\in\mathfrak{D}(\hat\dvec,\gamma)$ with
\[
\mathcal L(\pvec,\dvec') \;>\; \mathcal L(\pvec,\dvec^\star).
\]
Then there must exist some group $g=(f,v)$ such that the group term
\[
G_g(\dvec):=\max\Big\{\frac{\ell_g-S_g(\dvec)}{u_g},\frac{S_g(\dvec)-u_g}{u_g}\Big\}
\]
satisfies \(G_g(\dvec')>G_g(\dvec^\star)\). We analyze the two exhaustive cases: overshooting the upper quota and undershooting the lower quota.

\paragraph{Preliminary: we give maximal slack to \(N_g\).}  
Because \(G_g(\dvec)\) depends only on coordinates \(\{d_i:i\in N_g\}\), any feasible changes to coordinates outside \(N_g\) do not affect the group term. Therefore, without loss of generality, we may first modify \(\dvec'\) on \(\overline N_g\) to allocate as much allowable probability as possible there (respecting the box constraints) so that \(N_g\) is left with maximal slack and still consistent with feasibility. This is exactly the adjustment the algorithm performs when forming its candidates: it sets complement coordinates to their upper (or lower) bounds as needed before touching group members. Doing this cannot decrease \(G_g(\dvec')\), and it simplifies the comparison to $\dvec^\star$ because any remaining difference between $\dvec'$ and $\dvec^\star$ will lie inside \(N_g\).
\paragraph{Case 1: $\dvec'$ violates the upper quota more than $\dvec^\star$.}
That is,
\[
G_g(\dvec') = \frac{S_g(\dvec')-u_g}{u_g} > G_g(\dvec^\star),
\]
so
\[
S_g(\dvec') \;>\; S_g(\dvec^\star),
\quad\text{equivalently}\quad
\sum_{i\in N_g} p_i d_i' \;<\; \sum_{i\in N_g} p_i d_i^\star.
\]
Let $i^\star$ be an index in $N_g$ with the smallest $p_i$ for which $d_{i^\star}' < d_{i^\star}^\star$ (such an index exists because the weighted sum under $\dvec'$ is strictly smaller). By the preliminary reduction we may assume the complement coordinates in $\dvec'$ are already set to give maximal slack to \(N_g\), so any further decrease of $d_{i^\star}$ would require compensating increases somewhere (either within \(N_g\) on other indices or in the complement if slack remained). But the algorithm's `keep' construction already sets complement coordinates to their allowed maxima and sets group coordinates to their allowed minima before projection. Under the complement-first, ascending-$p$ projection the algorithm only increases group coordinates in order of increasing $p_i$, so it produces the allocation that minimizes $\sum_{i\in N_g} p_i d_i$ (equivalently maximizes $S_g$) subject to feasibility. Hence no feasible $\dvec'$ can have strictly smaller $\sum_{i\in N_g} p_i d_i$ than $\dvec^\star$, contradicting the assumption.

\paragraph{Case 2: $\dvec'$ violates the lower quota more than $\dvec^\star$.}
That is,
\[
G_g(\dvec') = \frac{\ell_g-S_g(\dvec')}{u_g} > G_g(\dvec^\star),
\]
so
\[
S_g(\dvec') \;<\; S_g(\dvec^\star),
\quad\text{equivalently}\quad
\sum_{i\in N_g} p_i d_i' \;>\; \sum_{i\in N_g} p_i d_i^\star.
\]
Let $i^\star$ be the index in \(N_g\) with the smallest $p_i$ for which $d_{i^\star}' > d_{i^\star}^\star$ (such an index exists because the weighted sum under $\dvec'$ is strictly larger). Again, by the preliminary reduction we may assume complement coordinates in $\dvec'$ are already adjusted to give minimal complement mass (maximal group mass) consistent with feasibility. The algorithm's `drop' construction initializes group coordinates at their upper bounds and complement coordinates at their lower bounds, then uses the complement-first, ascending-$p$ projection (which reduces the smallest-$p$ group members first) to realize the allocation that maximizes $\sum_{i\in N_g} p_i d_i$ (equivalently minimizes $S_g$). Therefore no feasible $\dvec'$ can achieve a strictly larger $\sum_{i\in N_g} p_i d_i$ than $\dvec^\star$, a contradiction.

Both exhaustive cases lead to contradictions. Hence no feasible $\dvec'\in\mathfrak{D}(\hat\dvec,\gamma)$ has strictly larger loss than $\dvec^\star$, and $\dvec^\star$ is a global maximizer of \(\mathcal L(\pvec,\cdot)\).
\end{proof}

\subsection{Methods for prediction of $\hatd$}\label{app:hatd}
We construct $\hatd$ with a simple independent action model, calculating dropout probabilities as a function of two features that are shared across instances: age and gender. 

Although these features are shared, the particular values are not consistent across instances. Thus, we preprocess the demographic panel data to ensure consistent feature representations across the multiple datasets. We standardize the \texttt{age} and \texttt{gender} features and learn dropout probabilities with these features and their second order co-occurrences. 

\paragraph{Age Standardization}

Age data were reported in across datasets, using various age ranges.

To standardize ages across datasets, we first converted reported age ranges into single numeric ages by sampling uniformly within the reported range. 

For an age range ``$a$--$b$'', we sampled an integer uniformly from $[a, b]$ (inclusive of $a$. For open-ended ranges like ``75+'', we sampled uniformly between $75$ and $85$ as a reasonable upper bound.  For single numeric ages, we used the reported value directly.

After assigning numeric ages, participants were bucketed into consistent age groups. 

\paragraph{Gender Standardization}

Gender entries were also partially heterogeneous across datasets. We standardized gender into four categories:

\begin{itemize}
    \item \textbf{Female}: entries such as ``female'', ``woman'', ``girl'', ``f''.
    \item \textbf{Male}: entries such as ``male'', ``man'', ``boy'', ``m''.
    \item \textbf{Nonbinary}: entries such as ``non-binary'', ``nonbinary''
    \item \textbf{Other}: all remaining or missing entries.
\end{itemize}

\paragraph{Learning Dropout Probabilities}

We model the probability that a selected participant drops out of a panel using multiplicative beta parameters associated with individual features and their interactions. 

\textbf{Feature tuples}: For a chosen feature list (e.g., \texttt{age}, \texttt{gender}), we generate all unique tuples of features and values to account for second-order interactions). This defines new features and values of the form $f', v' = (f^1-f^2, v^1-v^2)$. For each panel instance, we collect all unique value combinations for each feature tuple among selected participants, and we learn beta parameters on each of them. 

\textbf{Likelihood function}: For each participant, we compute the probability a person drops out as the complement of the probability they \emph{stay} in the panel as
    \[
    1 - \hatd_i = \Pr(\text{person $i$ stays in}) = \beta_0 \prod_{f \in F} \beta_{f,f(i)},
    \]
    where $\beta_0$ is a base probability, $F$ is the set of feature tuples, and $v = f(i)$ is the participant's value for feature or feature tuple $f$. These parameters are estimated via MLE, with the implementation matching that of \cite{assos2025alternates}.

We provide summary statistics of our estimated dropout probabilities for each instance $i$ below. For each instance $i$, we find the $\hatd$ by building a model with parameters derived from the features, values and dropouts of all instances \textit{except} instance $i$ . 
\begin{center}
    \begin{tabular}{lcccc}
\toprule
Instance & Min & Max & Mean & Std Dev \\
\midrule
Instance 1 & 0.0004 & 0.2778 & 0.1157 & 0.0856 \\
Instance 2 & 0.0003 & 0.3750 & 0.1433 & 0.1351 \\
Instance 3 & 0.0004 & 0.6769 & 0.1241 & 0.1116 \\
Instance 4 & 0.0004 & 0.3662 & 0.1874 & 0.1115 \\
Instance 5 & 0.0004 & 0.4616 & 0.2115 & 0.0959 \\
Instance 6 & 0.0004 & 0.4286 & 0.1334 & 0.1022 \\
\bottomrule
\end{tabular}
\end{center}

\subsection{Convergence Results} \label{app:convergence}
We provide plots demonstrating convergence to a stable loss value over rounds below. 
\begin{figure}[H]
    \centering
    
    \includegraphics[width=0.4\linewidth]{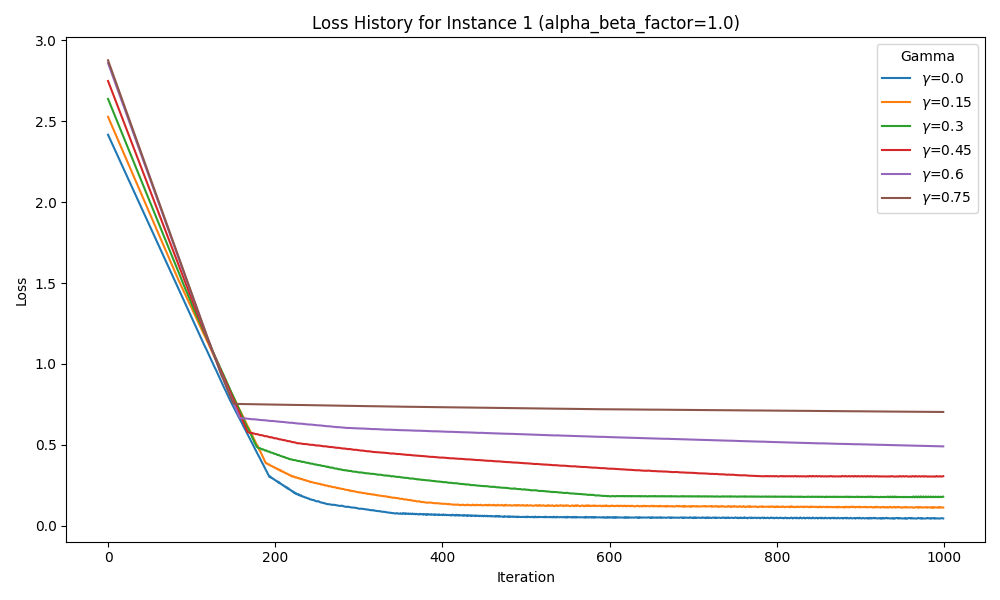}
    \includegraphics[width=0.4\linewidth]{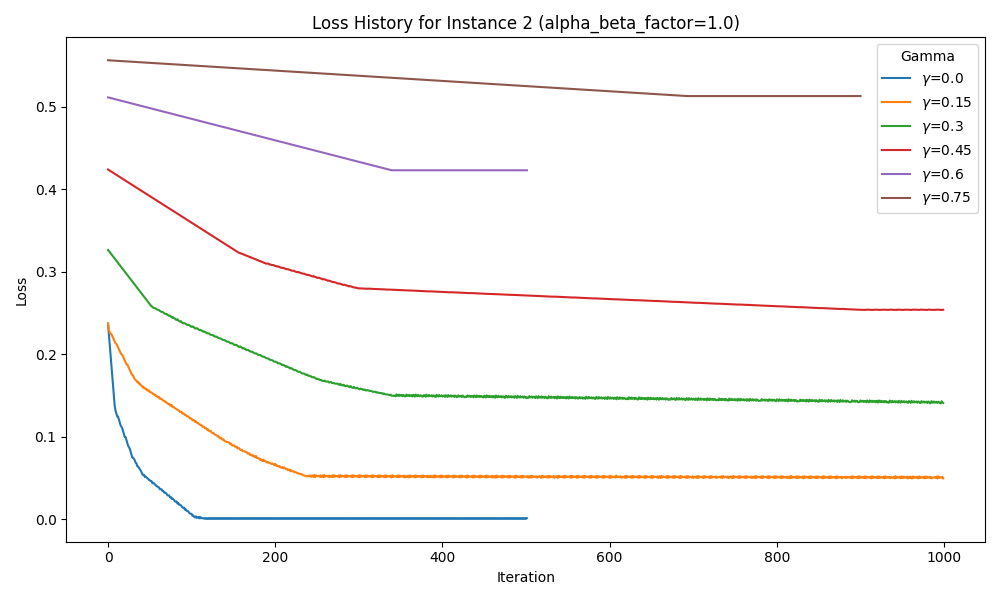}\\
    \includegraphics[width=0.4\linewidth]{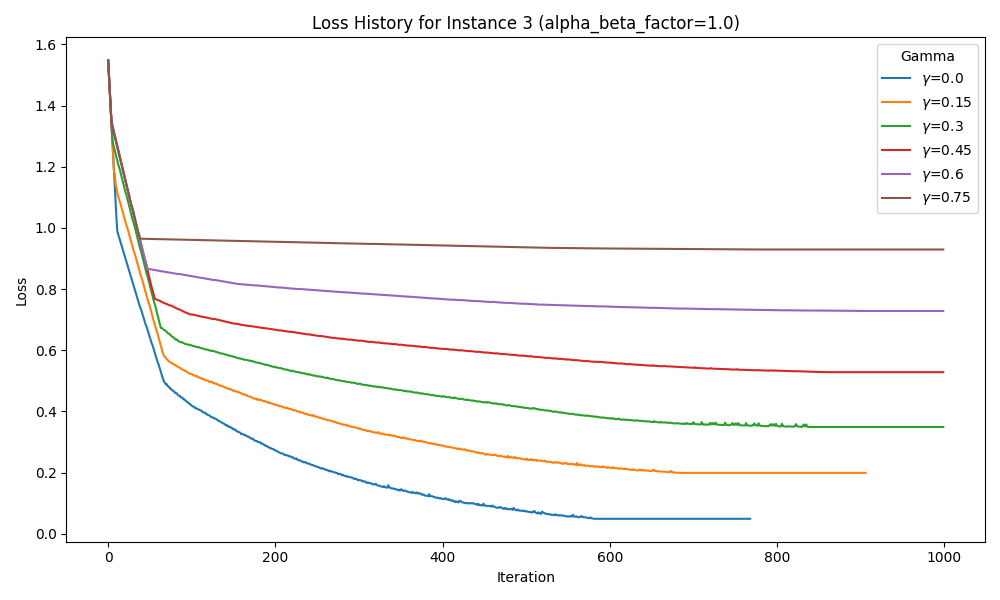}
    \includegraphics[width=0.4\linewidth]{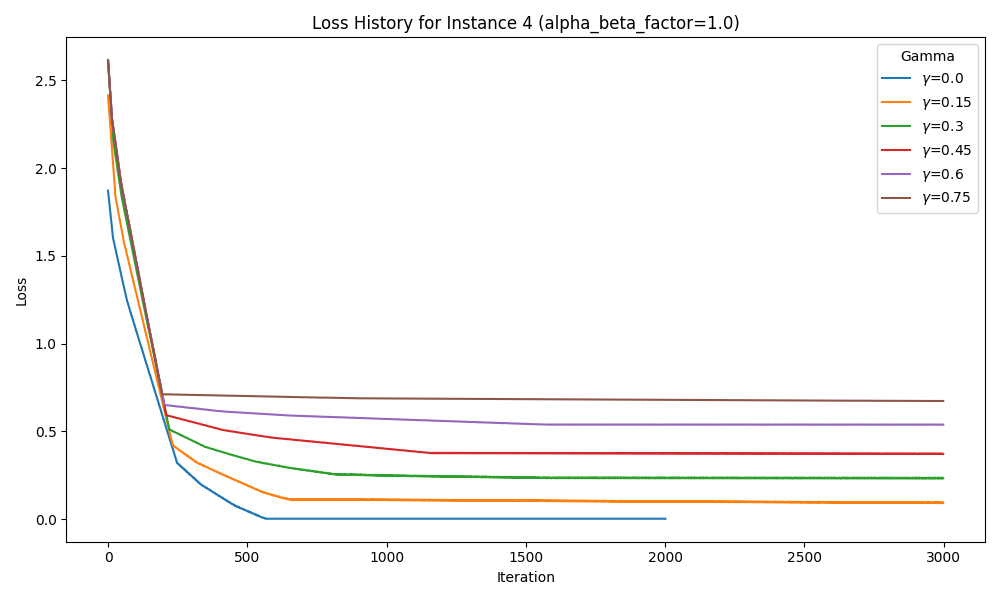}
    \includegraphics[width=0.4\linewidth]{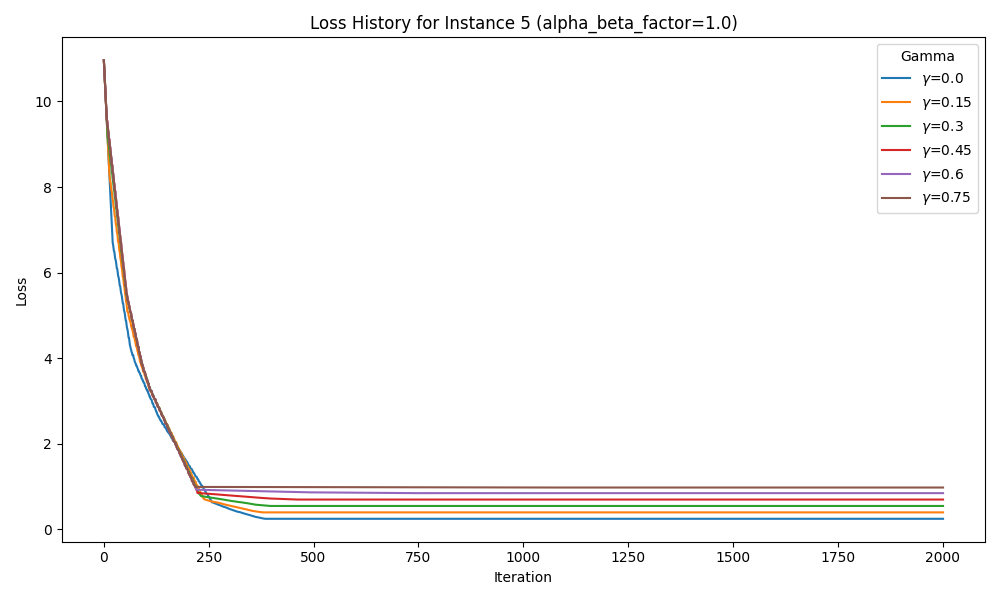}
    \includegraphics[width=0.4\linewidth]{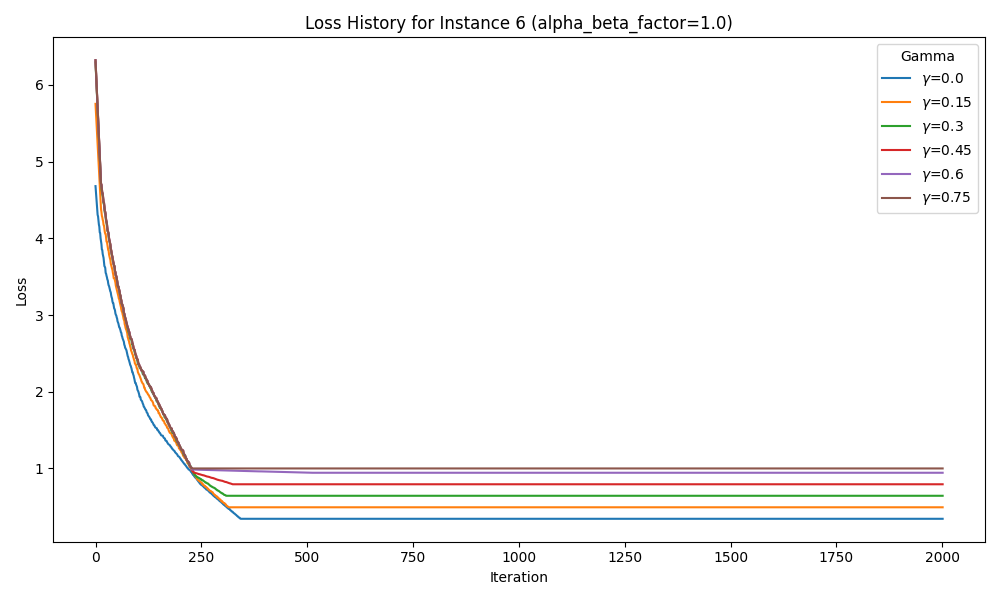}
\end{figure}

\subsection{Implementation of ERM Benchmark} \label{app:erm-benchmark}

We make two key alterations to the ERM algorithm presented in main text of \cite{assos2025alternates}: selecting an \textit{entire panel} instead of alternates, and optimizing on our loss function, rather than the loss function from the paper - for consistency. While the core contribution of the paper, as outlined in \cref{sec:intro}, selects a set alternates from which replacements can be drawn, this does not exactly match our work. Instead, we consider the extension of the algorithm in which ERM optimization is used to select a maximally robust panel from the start, parallel to our approach. Our ERM Benchmark implementation is from Appendix F.3 of \cite{assos2025alternates}. Assos et. al also measures loss as \textit{either} the \textit{sum} of all (scaled) deviation over all feature values, or a binary measure of whether all quotas were achieved. We instead optimize and evaluate on our loss function from \cref{section:model-new}, which measures the \textit{maximum} scaled deviation over any feature value. Both of these choices are in an effort to make the algorithms more comparable in both approach and evaluation.

\subsection{Remaining instances: Figure 1} 
\label{app:more_heatmaps}
\begin{figure}[H]
    \centering
    \includegraphics[width=0.4\linewidth]{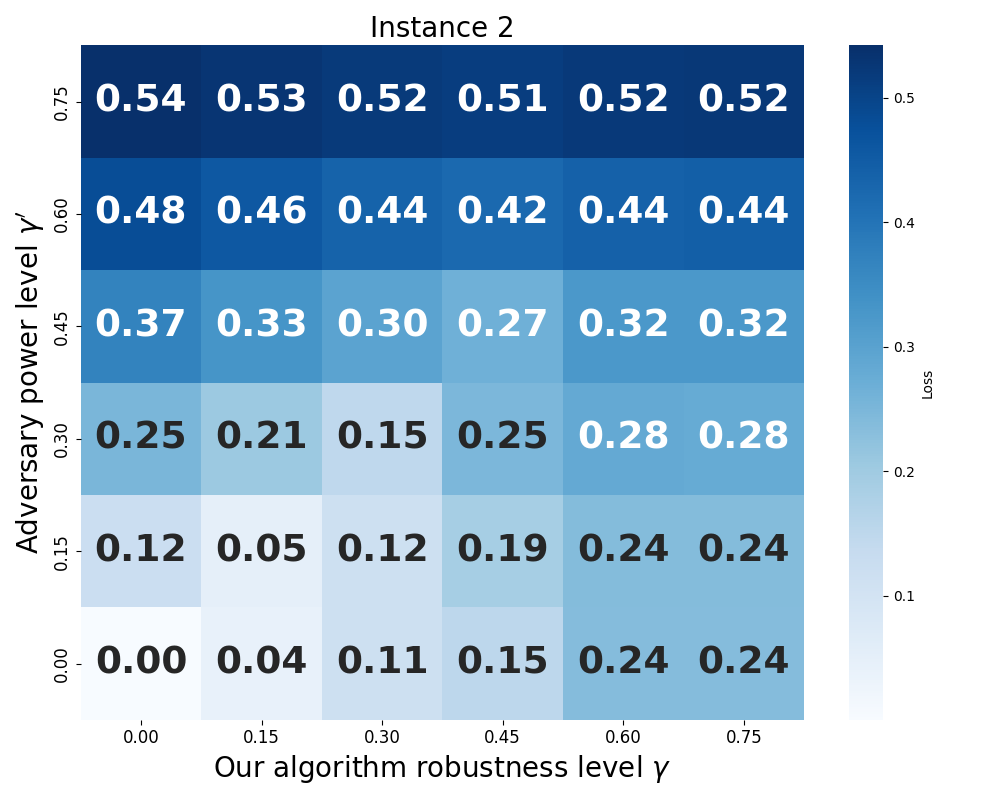}
    \includegraphics[width=0.4\linewidth]{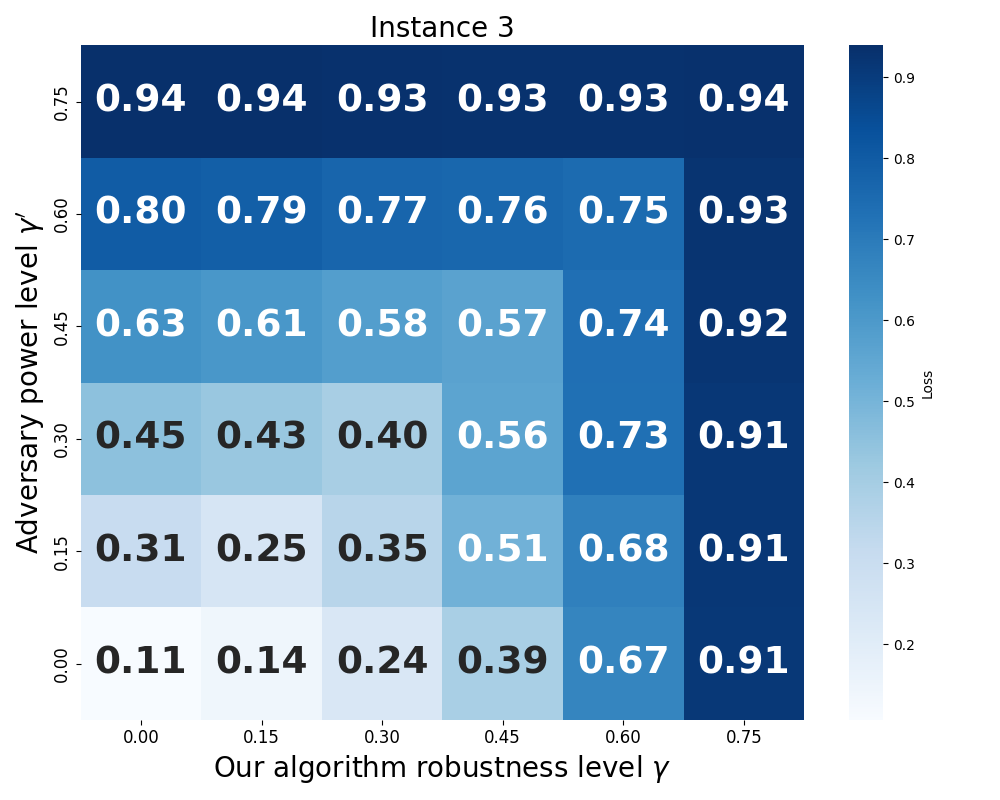}
    \includegraphics[width=0.4\linewidth]{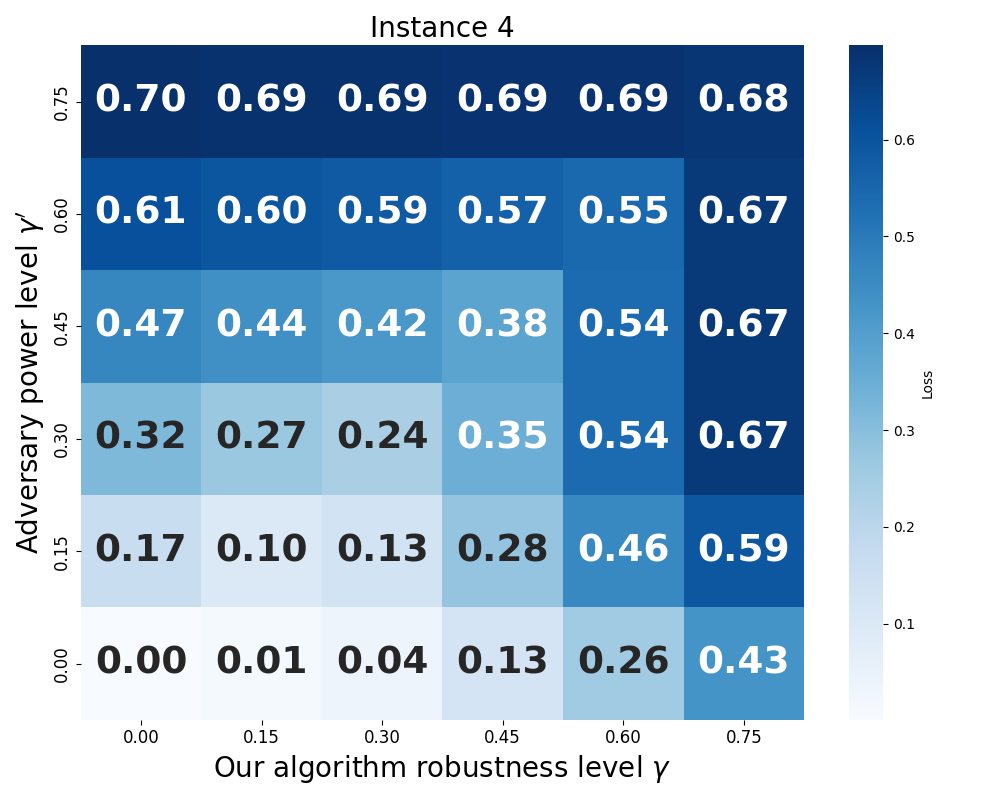}\\
    \includegraphics[width=0.4\linewidth]{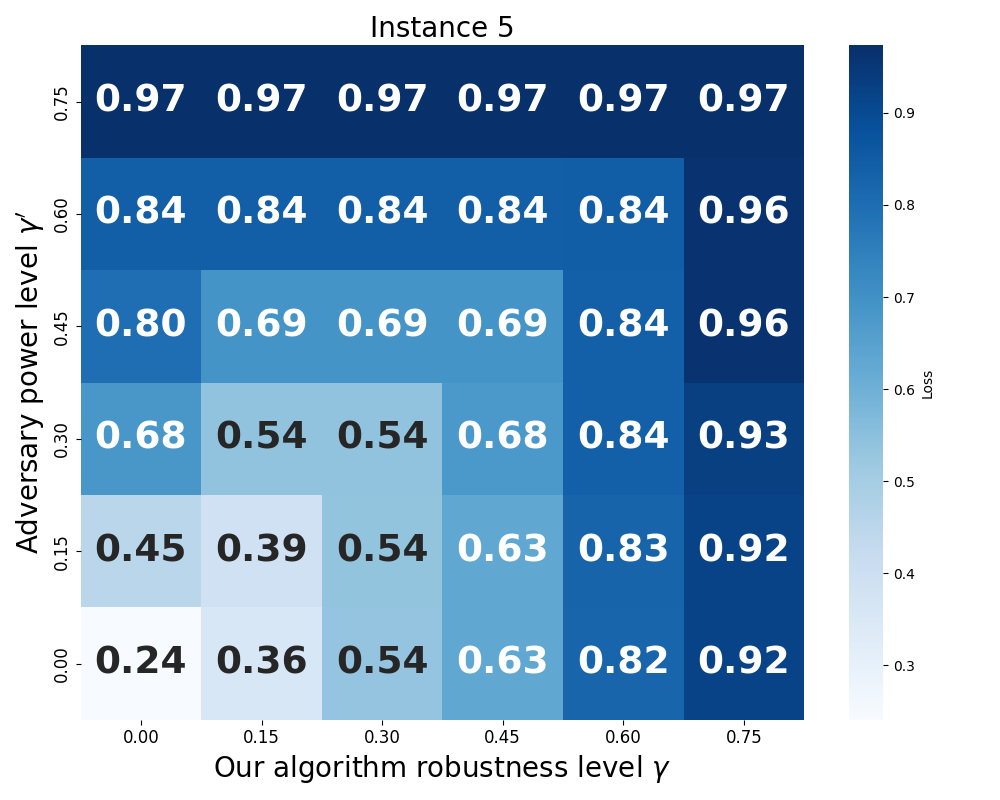}
    \includegraphics[width=0.4\linewidth]{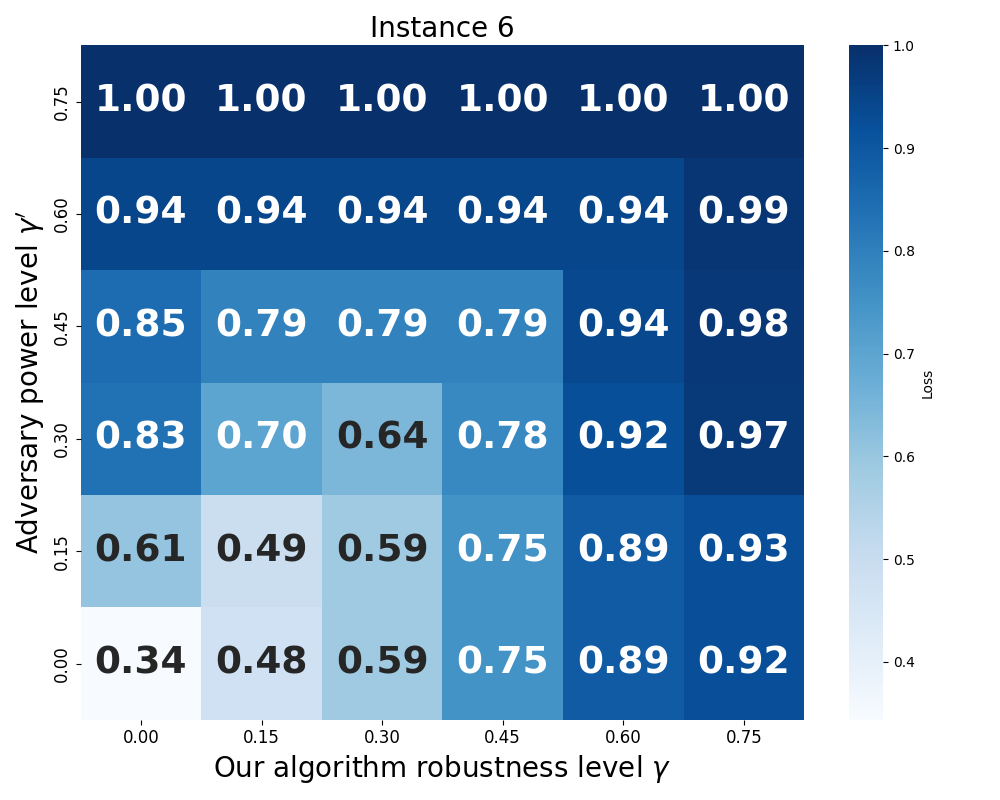}
\end{figure}

\subsection{Remaining instances: Figure 3} \label{app:more_benchmark}
\begin{figure}[H]
    \centering
    \includegraphics[width=0.4\linewidth]{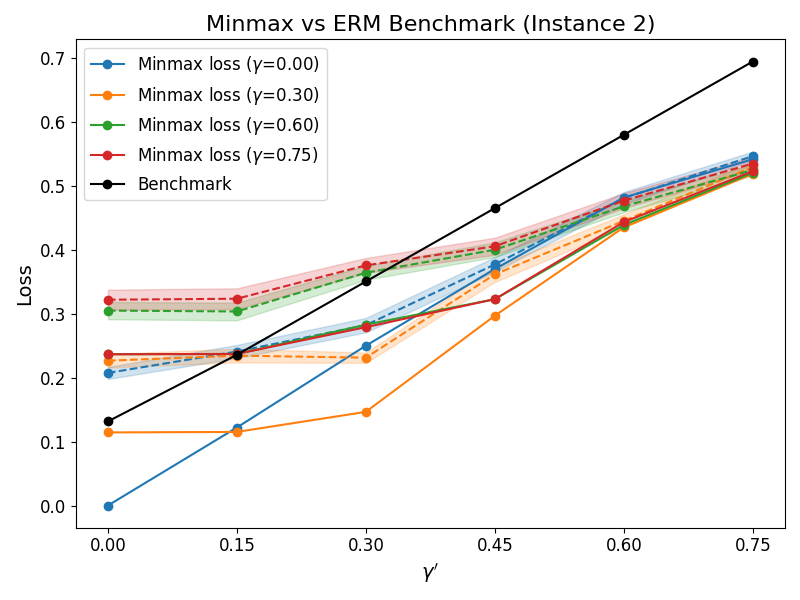}
    \includegraphics[width=0.4\linewidth]{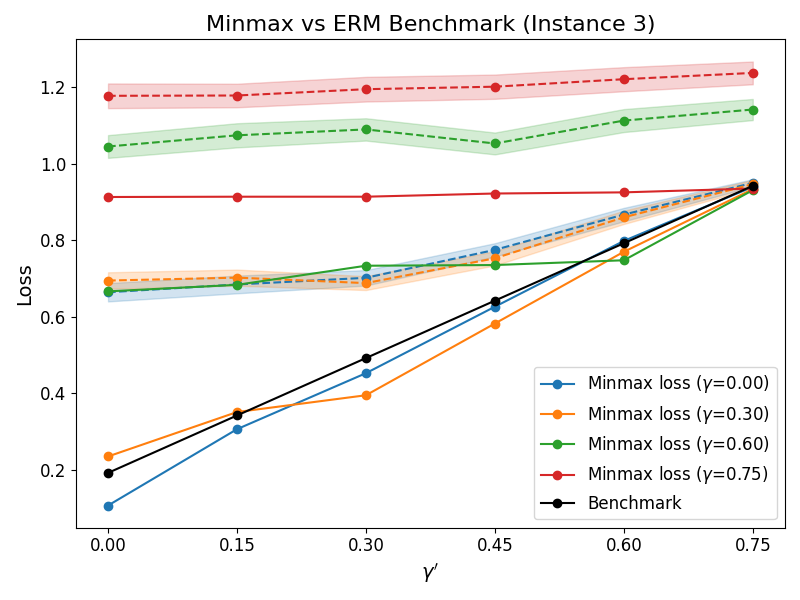}
    \includegraphics[width=0.4\linewidth]{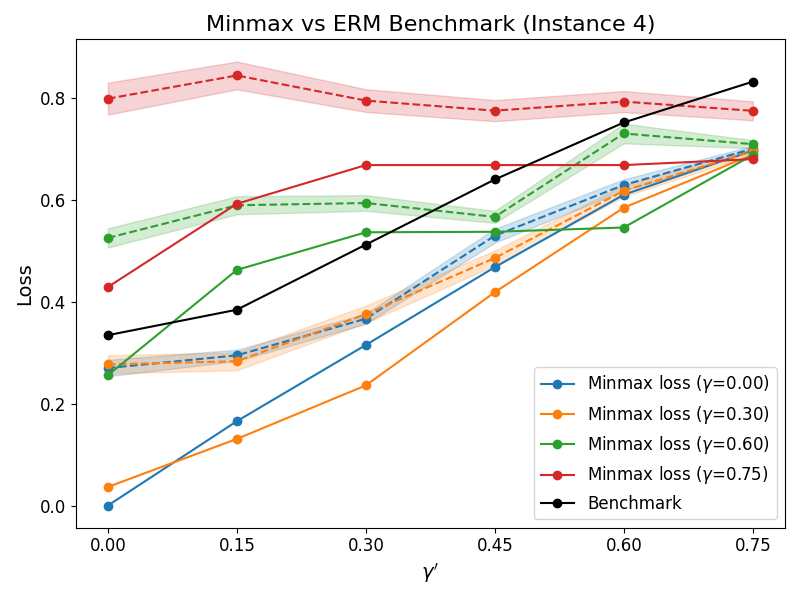}\\
    \includegraphics[width=0.4\linewidth]{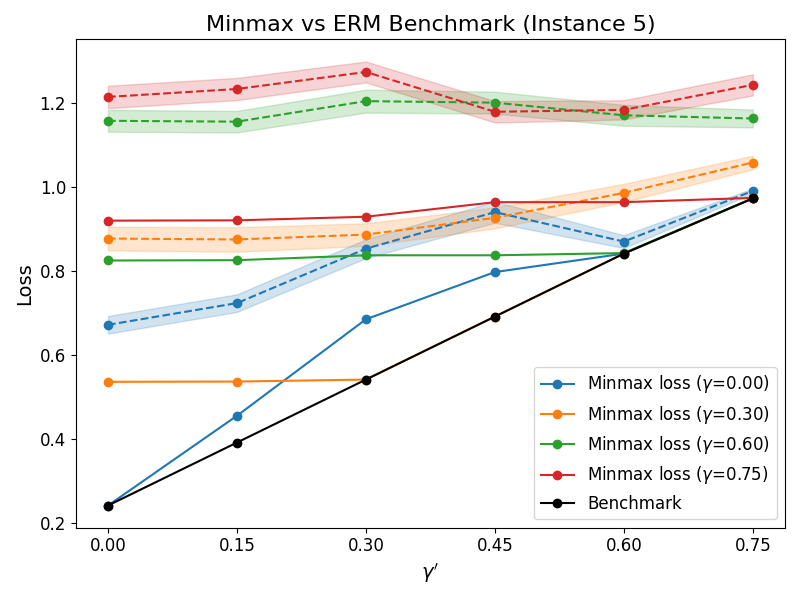}
    \includegraphics[width=0.4\linewidth]{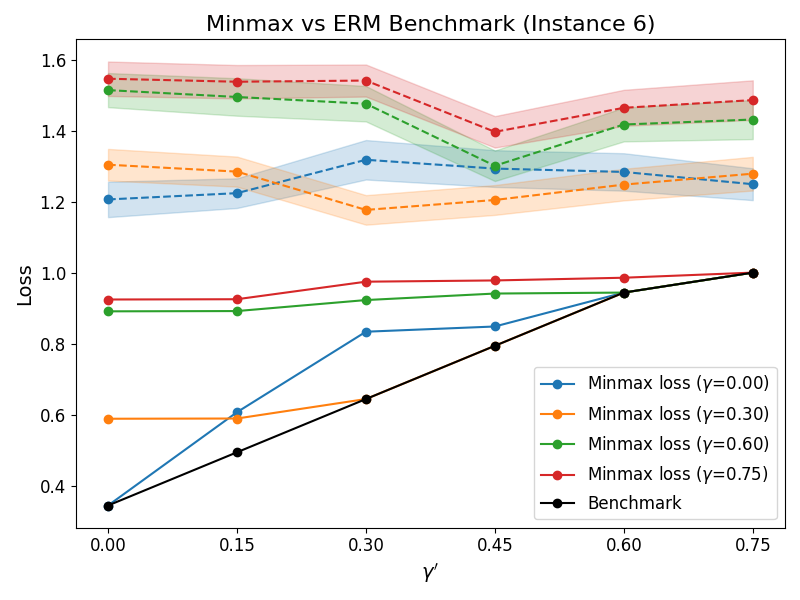}
\end{figure}

\subsection{Selection probabilities of \textsc{ERM-Benchmark}}\label{app:selprobs}
Although the ERM Benchmark returns a deterministic panel of people, it does randomize over identical pool members with equivalent feature value pairs. This makes intuitive sense, as it makes no difference to the objective if we replace pool member with one that is demographically identical. Thus, we calculate the selection probability of each panelist as: 

$$\Pr[\text{Pool member $i$ is selected for the panel}] = \frac{{\sum_{j = 1}^n \mathbf{1}[f(j) = f(i) \quad \forall f \in F ]}}{\text{n}}$$ 

i.e. the proportion of the pool that is people with identical feature-value composition to person $i$.

\end{document}